\newif\ifdouble
\newtheorem{theorem}{Theorem}[section]
\newtheorem{corollary}{Corollary}[theorem]
\newtheorem{lemma}[theorem]{Lemma}
\newcommand{\paren}[1]{\left( #1 \right)}
\newcommand{\cbrace}[1]{\left\{#1\right\}}
\newcommand{\sbrace}[1]{\left[#1\right]}
\newcommand{\abs}[1]{\left| #1\right|}
\newcommand{\off}[1]{}
\newcommand\blfootnote[1]{%
  \begingroup
  \renewcommand\thefootnote{}\footnote{#1}%
  \addtocounter{footnote}{-1}%
  \endgroup
}
\newcommand{\eq}{\leftarrow}
\newcommand{\parent}[1]{\pi\paren{#1}}
\newcommand{\pj}{\parent{j}}
\newcommand{\cC}{{\cal C}}
\newcommand{\cE}{{\cal E}}
\newcommand{\cG}{{\cal G}}
\newcommand{\cN}{{\cal N}}
\newcommand{\cV}{{\cal V}}
\newcommand{\vc}{\vec{c}}
\newcommand{\vu}{\vec{u}}
\newcommand{\vx}{\vec{x}}
\newcommand{\vy}{\vec{y}}
\newcommand{\vz}{\vec{z}}
\newcommand{\vX}{\vec{X}}
\newcommand{\vY}{\vec{Y}}
\newcommand{\vZ}{\vec{Z}}
\newcommand{\vchat}{\hat{\vc}}
\newcommand{\vxhat}{\hat{\vx}}
\newcommand{\vyhat}{\hat{\vy}}
\newcommand{\vzhat}{\hat{\vz}}
\newcommand{\vZhat}{\widehat{\vZ}}
\newcommand{\gf}{\mathbb{F}}
\newcommand{\gftwo}{\gf_{2}}
\newcommand{\ex}[1]{\mathbb{E}\paren{#1}}
\newcommand{\var}[1]{\text{var}\paren{#1}}
\newcommand{\mutualinf}[2]{I\paren{#1;#2}}
\newcommand{\krd}[1]{{\color{red} KRD \color{red}{#1}}}
\newcommand{\SNR}{\text{SNR}}
\newcommand{\BLER}{B}
\begin{document}
\title{Noise Recycling}



 \author{%
   \IEEEauthorblockN{Alejandro Cohen\IEEEauthorrefmark{1},
                     Amit Solomon\IEEEauthorrefmark{1},
                     Ken R. Duffy\IEEEauthorrefmark{2},
                     and Muriel M\'edard\IEEEauthorrefmark{1}}\\
   \IEEEauthorblockA{\IEEEauthorrefmark{1}%
                     \textit{RLE, MIT}
                    Cambridge, MA 02139, USA,
                    \{cohenale,amitsol,medard\}@mit.edu}

   \IEEEauthorblockA{\IEEEauthorrefmark{2}%
                     \textit{Hamilton Institute
                    {Maynooth University, Ireland}, ken.duffy@mu.ie}}\\
 }

\maketitle


\begin{abstract}

We introduce Noise Recycling, a method that substantially enhances
decoding performance of orthogonal channels subject to correlated
noise without the need for joint encoding or decoding. The method
can be used with any combination of codes, code-rates and decoding
techniques. In the approach, a continuous realization of noise is
estimated from a lead channel by subtracting its decoded output
from its received signal. The estimate is recycled to reduce the
Signal to Noise Ratio (SNR) of an orthogonal channel that is experiencing
correlated noise and so improve the accuracy of its decoding. In
this design, channels only aid each other only through the provision
of noise estimates post-decoding.

For a system with arbitrary noise correlation between orthogonal
channels experiencing potentially distinct conditions, we introduce
an algorithm that determines a static decoding order that maximizes
total effective SNR with Noise Recycling. We prove that this
solution results in higher effective SNR than independent decoding,
which in turn leads to a larger rate region. When the noise is
jointly Gaussian, we establish that Noise Recycling employing this
static successive order enables higher code rates. We derive upper
and lower bounds on the capacity of any sequential decoding of orthogonal
channels with correlated noise where the encoders are independent
and show that those bounds are almost tight. We numerically compare
the upper bound with the capacity of jointly Gaussian noise channel
with joint encoding and decoding, showing that they match.

Simulation results illustrate that Noise Recycling can be employed
with any combination of codes and decoders, and that it gives
significant Block Error Rate (BLER) benefits when applying the
static predetermined order used to enhance the rate region. We
further establish that an additional BLER improvement is possible
through Dynamic Noise Recycling, where the lead channel is not
pre-determined but is chosen on-the-fly based on which decoder
provides the most confident decoding. Noise Recycling thus offers
significant decoding performance improvements without the need for
specific codes and decoders, or additional coordination between the
sender and receiver.  \end{abstract}

\begin{IEEEkeywords}
Noise Recycling, FEC, Channel Decoding, Correlated Noise, Orthogonal Channels.
\end{IEEEkeywords}

\section{Introduction}
The use of orthogonal channels is commonplace in applications from
wired to wireless channels. Examples include the wide-spread use
of orthogonal frequency division multiplexing (OFDM)\blfootnote{Parts of this
work \cite{cohen202noiserecycling} were presented at the IEEE International Symposium on Information
Theory, ISIT 2020.} \cite{NP00,
Yang05,Cim85}, and of orthogonal schemes in multiple access, such frequency
division multiplexing access (FDMA), time-division multiple access
(TDMA), or orthogonal code-division multiple access (CDMA), see,
for instance \cite{Gal85}. Let us consider the case of adjacent channels in a fading channel environment, for instance in OFDM or TDM.  In OFDM (TDM), channels or (channel uses) separated by less than a
coherence band (coherence time) will experience correlated fading \cite{Edetal88, Yanetal01, Hoe90, CVC01,CSS06, ZX06}, or equivalently, correlated noise.
 While
in theory joint decoding across channels can make use of such
correlation to improve performance \cite{cover2012elements,gallager1968information}, in practice it
is a challenge to implement owing to the computation complexity of
the decoding, and lack of compatibility with existing decoding
schemes.  Indeed, joint decoding runs counter to the reason for
seeking orthogonality in the first place.

\begin{figure}
    \centering
    \ifdouble
    \includegraphics[trim= 3cm 0.8cm 2cm 0.8cm,clip,width=1 \columnwidth]{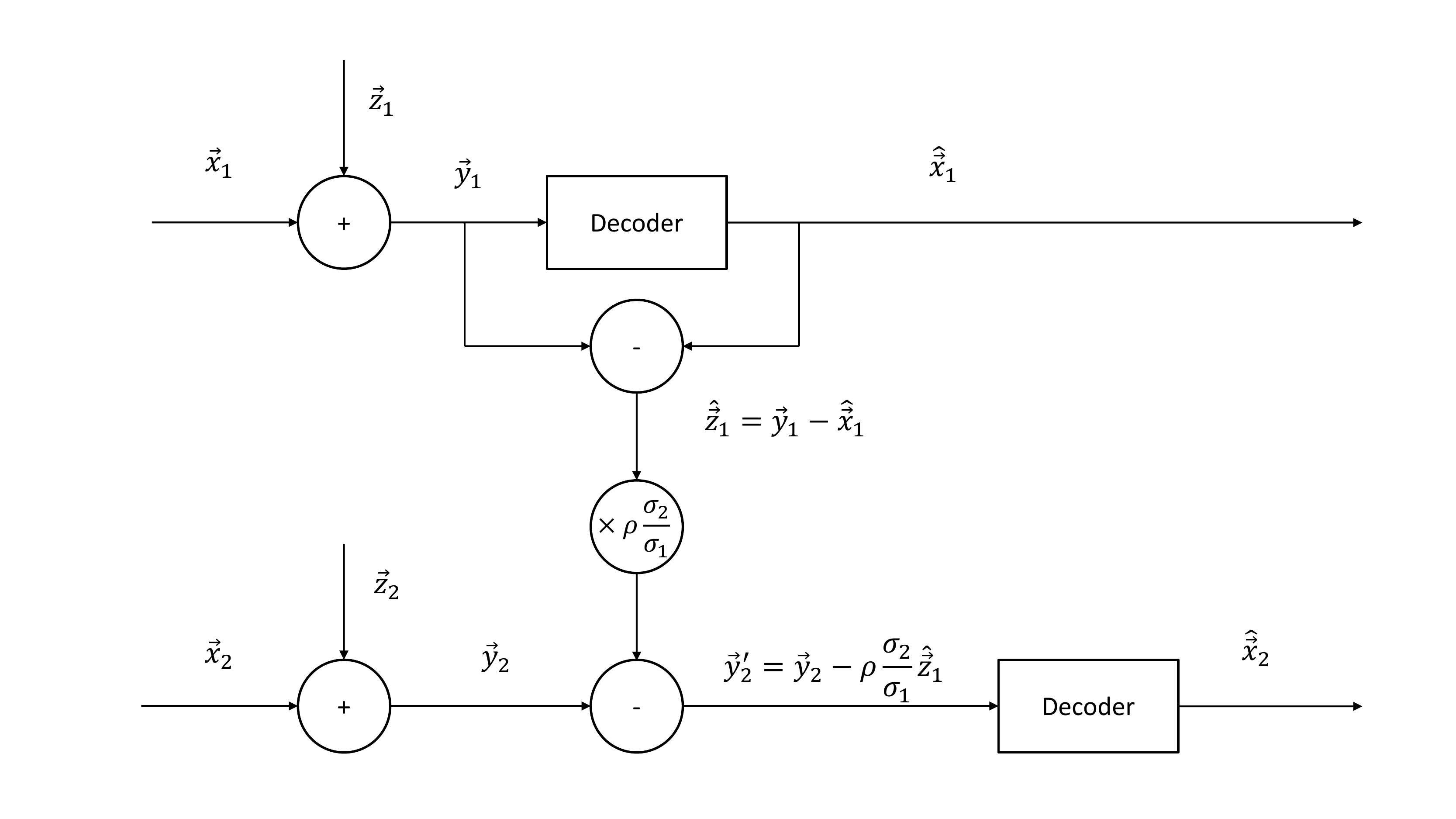}
    \else
     \includegraphics[trim= 3cm 0.8cm 2cm 0.8cm,clip,width=0.6 \columnwidth]{model}
    \fi
    \caption{In Noise Recycling, a noise estimate is created from a lead channel by subtracting its modulated decoding from the received signal. That estimate is used to reduce noise on a channel subject to correlated noise prior to decoding.}
    \label{fig:system_model}
    \vspace{-0.2in}
\end{figure}

Here we introduce a novel approach, Noise Recycling, that embraces
noise correlation to significantly improve decoding performance
while maintaining separate encoding and decoding across distinct
channels. The scheme is compatible with all encoding and decoding
schemes, and requires no coordination between channels. Its underlying
principle is that the realisation of noise experienced on that
channel is revealed when a decoding is correct, and that information
can be used to reduce the effective SNR for a neighbouring
communication.

Fig.~\ref{fig:system_model} provides an illustration of the
technique. Two independent channel inputs, $(\vx_1,\vx_2)$, from
potentially different codebooks are transmitted on orthogonal channels
and are corrupted by real-valued, mean-zero noise
$(\vZ_1,\vZ_2)$ with correlation $\rho$ and variances $\sigma_1^2$
and $\sigma_2^2$.  This results in correlated random real-valued
channel outputs $(\vY_1,\vY_2)= (\vx_1,\vx_2)+(\vZ_1,\vZ_2)$. For
a particular realization of outputs, $(\vy_1,\vy_2)$, a lead channel
is selected, say channel $1$, and $\vy_1$ is decoded to give
$\vxhat_1$. The decoder then estimates the noise realization
experienced on the lead channel by subtracting the decoded codeword
from the received signal $\vzhat_1 = \vy_1-\vxhat_1$. This noise
estimate is recycled to the second receiver who updates its channel
output prior to decoding via the Linear Least Square Estimator:
$\vy_2'=\vy_2-\rho\,\sigma_2/\sigma_1\vzhat_1$. If the lead
decoding is correct, which happens with probability one minus the
BLER, this eliminates part of the additive noise experienced on the
second channel $\vz_2$, before decoding, Noise Recycling results
in the revised second channel's output being a less noisy version
of the channel input $\vx_2$, which in turn leads to improved
decoding performance.

For code-rates below capacity, in the large code-length limit
concentration onto correct decodings occurs. This ensures that Noise
Recycling enables a larger rate region than independent decoding.
For BLER, there is a trade-off between the improved SNR that
comes with correct recycled noise and the disadvantage of passing
an erroneous estimate. A heuristic argument for why a gain is to
be expected is as follows. Consider two channels that have a block error
rate curves $B(\cdot)$  that is an increasing function of their noise variance
$\sigma^2$ when they are decoded independently, and assume that they experience
noise with correlation $\rho$. Suppose that the lead decoder either decodes
correctly or flags an error, as is the case in systems using a CRC
post-decoding for decoding validation. With probability $\BLER(\sigma^2)$
the lead channel would offer no noise for recycling and the BLER
performance of the second channel would also be $\BLER(\sigma^2)$.
If the lead channel decoded correctly, which occurs with
probability $1-\BLER(\sigma^2)$, the second channel would experience
a BLER of $\BLER(\sigma^2(1-\rho^2))$, giving, on average,
\begin{align}
\label{eq:bler}
\BLER(\sigma^2) \BLER(\sigma^2)  + \left(1-\BLER(\sigma^2)\right)
\BLER(\sigma^2(1-\rho^2)) <\BLER(\sigma^2).
\end{align}
Thus if the lead decoder never decodes erroneously, recycling is
always advantageous. Even in the absence of codeword validation
post-decoding, as errors are, in practice, rare, the benefits of
Noise Recycling are typically significant.

Noise Recycling is distinct from Interference Cancellation in
multiple access channels, where decoded codewords are subtracted
from received signals to remove interference \cite{cover2012elements,
Gal85, Ahl71, Lia72}. In Noise Recycling modulated decoded codewords are subtracted
from received signals to recover noise estimates which, owing to
correlation across channels, form a component of the noise in another
as-yet undecoded orthogonal channel. A proportion of the estimate
can, therefore, be subtracted from the received signal on the
orthogonal channel before decoding, reducing the latter's effective
noise. In non-orthogonal channels subject to both interference
and correlated noise, Noise Recycling and Interference Cancellation
could be used together.

In Section~\ref{sec:rate_gain} we mathematically establish the rate
gain that Noise Recycling provides over independently decoding the
channels. In Section~\ref{sub:decoding_order} we provide an algorithm
based on Maximum Directed Spanning Tree (MDST) that finds an optimal
static Noise Recycling decoding order for arbitrarily correlated
orthogonal channels in terms of maximizing the sum of effective
SNR. Assuming the noise is jointly Gaussian, in Section
\ref{subsec:achievable_region} we identify the extent of the enhanced
rate region that is made possible by the improved effective SNR.
We numerically evaluate rate gains over independent decoding of
channels, finding they improve both as correlation increases and
as the number of orthogonal channels increases for a given correlation.
For jointly Gaussian noise we then provide an upper bound on the
capacity of any pair of orthogonal channels with correlated noise
in Section~\ref{sec:UpperBound}. We compare the upper bound with
the achievability rate region given in
Section~\ref{subsec:achievable_region} for Noise Recycling, finding
that these bounds essentially coincide.  This upper bound is
numerically compared to the capacity of the channel using joint
decoding in which the encoders may cooperate \cite[Section
9.5]{cover2012elements} \cite{Tsy65, Tsy70} and, for the model considered in this work,
the upper bound and the capacity of the channel match.

In Section~\ref{sec:reliability_gain}, through simulation we determine
the BLER improvements yielded by Noise Recycling. We illustrate
that Noise Recycling can provide performance gains with any codes
at any rates using any decoders. We consider two distinct settings.
The one in Section \ref{sub:predetermined_order} is similar in
spirit to the approach presented in Section \ref{sec:rate_gain} and
employs the static channel recycling order that is determined based
on channel statistics via the MDST. The second setting, Section
\ref{sub:dynamic_order}, does not use a pre-determined decoding
order, but instead a dynamic per-realization one. The decoders of
orthogonal channels are first run in parallel.  The decoding that
results in the most confident decoding provides the first
estimate for Noise Recycling. While this approach is not designed
to provide rate gains, we show that it yields considerable BLER
improvements for both short and long codes.  Finally, we also
consider the possibility of re-recycling, where the lead channel
is itself fed a recycled noise estimate, finding that this
bootstrapping can improve the BLER performance of the leading
channel. A heuristic argument is provided in support of that
observation.

\section{System Model}\label{sec:model}
Let $x,\vx,X, \vX$ denote a scalar, vector, random variable, and
random vector, respectively. All vectors are row vectors. A linear
block code is characterized by $\sbrace{n,k}$, a code-length, $n$,
and a code-dimension, $k$, giving a rate $R=k/n$. The binary field
is denoted by $\gftwo$. Mutual information between $X,Y$ is denoted
by $\mutualinf{X}{Y}$.

We study an orthogonal channel system where $i\in\{1,\ldots,m\}$
messages, $\vu_i\in\gftwo^{k_i}$, are encoded into codewords
$\vc_i\in\gftwo^{n}$. Each codeword is modulated into a
block $\vx_i$ of real values and sent over continuous orthogonal channels subject
to additive real-valued noise. Channel outputs are
\begin{align*}
\vY_i=\vx_i+\vZ_i,
\end{align*}
where, $Z_{i}(l)$ the $l$-th element of $\vZ_i$, has mean
zero and variance
$\sigma_{i}^{2}$. For each $(i,j)$-th pair of orthogonal channels
the noise $(Z_i\paren{l},Z_j\paren{l})$ is assumed to follow a joint
distribution with correlation
\begin{align*}
\rho_{i,j} = \frac{\ex{Z_i(l)Z_j(l)}}{\sigma_i \sigma_j},
\text{ where }
\rho'_{i,j} = \rho_{i,j}\frac{\sigma_j}{\sigma_i}
\end{align*}
denotes the normalized correlation factor of the $j$-th channel
that is used in the linear least square estimator (LLSE) for
$Z_j\paren{l}$ given $Z_i\paren{l}$. The rate of the $i$-th code
is $R_i=k_i/n$, and the total rate is $R=\sum_{i=1}^m R_i$. Given
$(\vY_1,\ldots,\vY_m)$, the goal is to estimate $(\vc_1,\ldots,\vc_m)$
using $m$ distinct decoders.

When considering rate regions, we assume that each channel is decoded
only once. As a result, the SNR of the lead channel remains unchanged.
In the simulation results, however, we consider circumstances where
the decoders may operate repeatedly.

\section{Effective SNR Gain with Noise Recycling}\label{sec:rate_gain}
In this section we determine the rate region that can be achieved
by using Noise Recycling. We show that the total effective SNR
increases when Noise Recycling is applied. When Noise Recycling is
not used, each orthogonal channel has to be decoded independently
with a rate below that channel's capacity. In Noise Recycling, we
update the received signal of the $j$-th channel using noise estimated
from the $i$-th channel's decoding via the linear least squared
estimator
\begin{align*}
Y_j'=Y_j-\rho'_{i,j}Z_i=X_j+Z_j-\rho_{i,j}\frac{\sigma_j}{\sigma_i} Z_i.
\end{align*}
To see that it increases effective SNR with the recycled
noise is correct, note that
\begin{align}
\var{Z_j-\rho'_{i,j}Z_i} = \var{Z_j}(1-\rho^2_{i,j}),
\label{eq:varred}
\end{align}
and this is less than $\var{Z_j}$
so long as $\rho^2_{i,j}>0$. As a result, Noise Recycling
increases the orthogonal channel's effective SNR when the recycled
noise is correct. When the channel noise variances, $\var{Z_j}$,
are heterogeneous, eq. \eqref{eq:varred} demonstrates that Noise
Recycling impacts SNR in an asymmetric manner and so the order in
which noise is recycled impacts the total effective SNR.

\subsection{Optimal Decoding Order}\label{sub:decoding_order}

\begin{figure}
    \centering
    \includegraphics[trim=1.8cm 0.7cm 0cm 1cm,clip,scale=0.675]{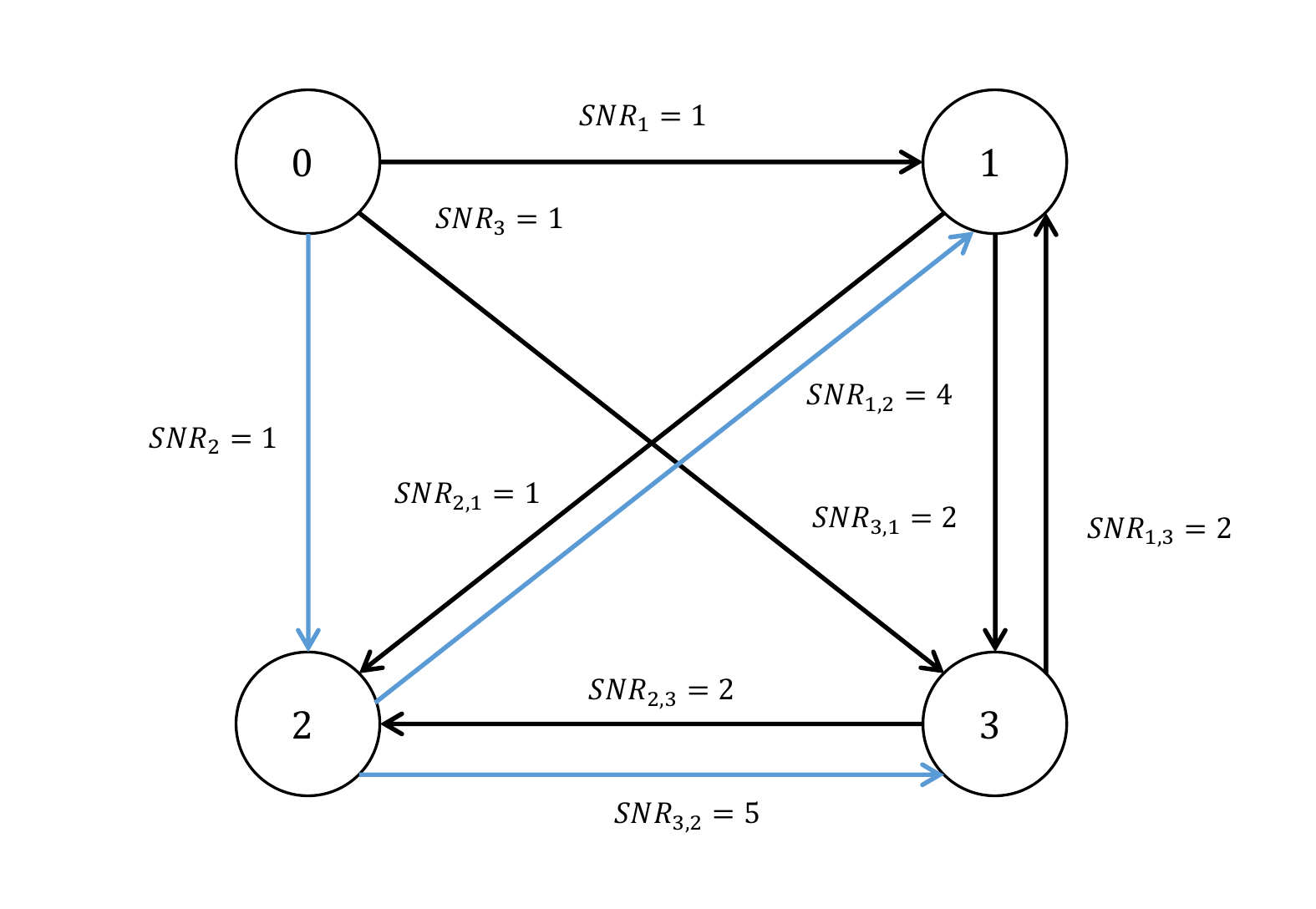}
    \caption{An example of the constructed graph of Section~\ref{sub:decoding_order}. There are $m=3$ orthogonal channels. The edges of a Maximum Directed Spanning Tree are painted in blue. In this example, the leading channel is the second orthogonal channel. The estimation of the second orthogonal channel is sent to the first and third orthogonal channels.}
    \label{fig:graph}
\end{figure}

We identify a solution to the problem of determining an optimal
static order for Noise Recycling using tools from graph theory. We
first construct an almost fully connected directed graph
$\cG=\paren{\cV,\cE}$ with $m+1$ nodes. Each node represents an
orthogonal channel, with one additional node called the zero node.
$\cG$ contains a directed edge from every node to every node, with
the exception of the zero node, which has only outgoing edges to
all other nodes. Each edge $\paren{i,j}$ is associated with a weight
\begin{equation*} w_{i,j}=\begin{cases} \SNR_{i,j} & i\neq 0,j\neq
0 \\ \SNR_{j} & i=0 \end{cases}, \end{equation*} where $\SNR_{i,j}$
is the effective SNR of orthogonal channel $i$ when using noise
recycled estimation of orthogonal channel $j$, and $\SNR_j$ is the
SNR of orthogonal channel $j$ without Noise Recycling.  Recall that
$\SNR_{i,j}$ is a function of the transmission power on channel
$i$, the statistics of the noise in channel $i$, $\sigma_i$, and
the correlation factor $\rho'_{i,j}$. The zero nodes helps us
represent the SNR associated with decoding without Noise Recycling.
An example of such graph with $m=3$ is shown in Fig.~\ref{fig:graph}.

\begin{figure}
    \centering
    \ifdouble
    \includegraphics[width=1 \columnwidth]{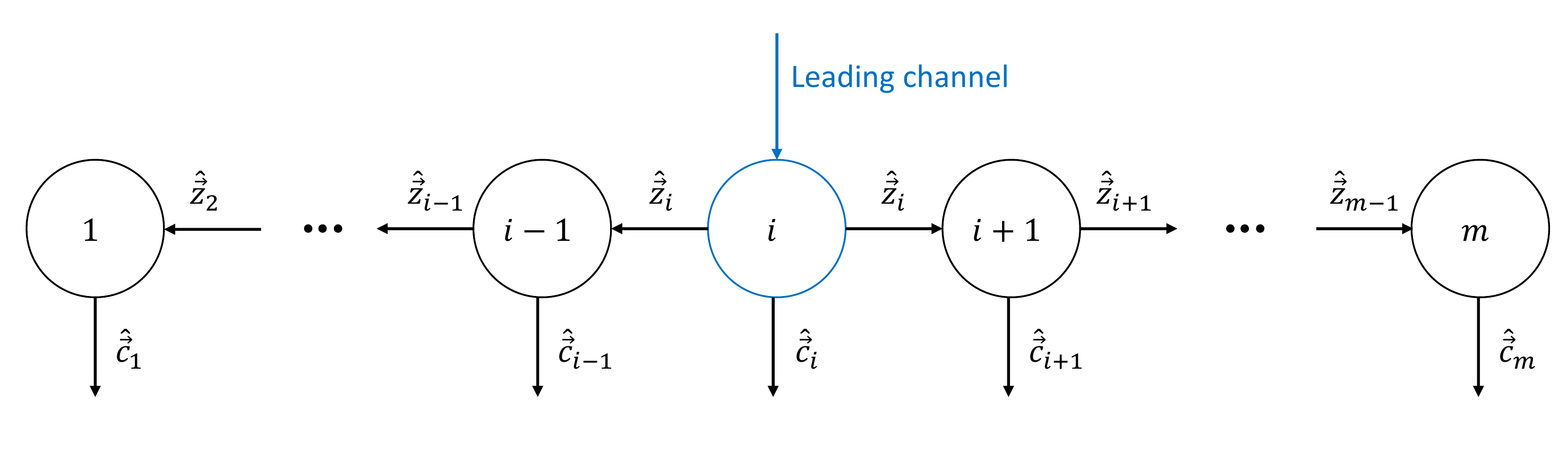}
    \else
    \includegraphics[width=0.6 \columnwidth]{gm_chain}
    \fi
    \caption{Gauss-Markov noise model, where correlation between
    adjacent orthogonal channels is $\rho$. A leading orthogonal
    channel $i$ is decoded first, and propagation of noise estimations
    follows to help decoding of all orthogonal channels.}
    \label{fig:gm_chain}
\end{figure}

\begin{figure*}
    \centering
    \ifdouble
    \includegraphics[width=2.05 \columnwidth]{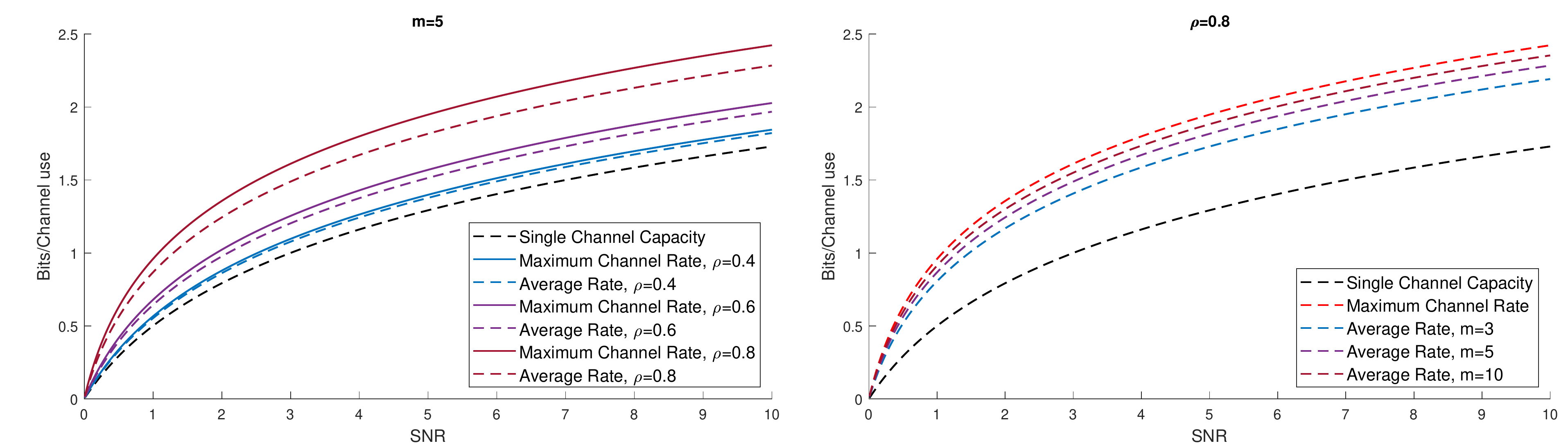}
    \else
    \includegraphics[width=1 \columnwidth]{capacity} \fi
    \caption{Achievable regime for the decoder of Section~\ref{sec:rate_gain}. Single Channel Capacity is  $C_1=C\paren{P/\sigma^2}=1/2\log\paren{1+P/\sigma^2}$, the capacity of an orthogonal channel that does not use recycled noise, as is the case for the first channel, $j=1$. Maximum Channel Rate is the rate of an orthogonal channel decoded with Noise Recycling $C_j=C\paren{P/(\paren{1-\rho^2}\sigma^2)}$, $j>1$. The average rate is the average rate per orthogonal channel, namely $(C_1+\paren{m-1}C_2)/m$.}
    \label{fig:capacity}
\end{figure*}

We show a decoding order which uses a MDST. Finding a MDST can be done efficiently using Edmond's algorithm~\cite{edmonds1967optimum}. Recall that a MDST $\cG'=\paren{\cV',\cE'}$, with a root node $r\in\cV$, has the following properties:
\begin{enumerate}
    \item $\cV'=\cV,\cE'\subseteq\cE$.
    \item $\forall i\in\cV',i\neq r: \exists$ a path from $r$ to $i$ in $\cE'$.
    \item The undirected version of $\cG'$ is a tree.
    \item The cost of $\cG'$, which is defined as the sum of all the weights of $\cE'$, is maximal among all possible Directed Spanning Trees of $\cG$.
\end{enumerate}
The decoding order we propose traverses a MDST of $\cG$ in a Breadth
First Search (BFS) fashion~\cite{cormen2009introduction} (although
other tree traversals are possible): First, all the nodes that are
connected to the zero node in $\cG'$ are decoded without noise
recycling. Then, every node that is connected to a previously decoded
node is decoded, using the noise recycled estimation of its parent
node. In the example of Fig.~\ref{fig:graph}, only node number 2
is connected to the zero node in the MDST. Hence, only orthogonal
channel number 2 is decoded without Noise Recycling. Then, orthogonal
channels number 1 and 3 are decoded with Noise Recycling, using the
estimation of the noise of the second orthogonal channel, as dictated
by the MDST. We see that the sum of the SNR with Noise Recycling
is $1+4+5$, which is the highest possible sum in this example, as
opposed to $1+1+1$ without Noise Recycling. Later in this section,
we show in Theorem~\ref{LB} that the achievable rates of this example
are $C\paren{1},C\paren{4},C\paren{5}$, unlike
$C\paren{1},C\paren{1},C\paren{1}$ without Noise Recycling, where
$C\paren{\cdot}$ is the capacity of the underlying channel as a function
of SNR. Pseudo-code
of the procedure is given in Algorithm~\ref{alg:general_noise}.
Note that lines~\ref{line:no_noise_recycling},\ref{line:noise_recycling}
use the channel output $\vy_j$, and $\vzhat_i,\rho'_{i,j}$ in
line~\ref{line:noise_recycling}. We prove that this solution is
optimal for finding a decoding order which maximizes the sum of
SNRs:

\begin{algorithm}
\caption{Noise Recycling Order}
\label{alg:general_noise}
\begin{flushleft}
        \textbf{Input:} $\vy_1,\ldots,\vy_m$\newline
        \textbf{Output:} $\vchat_1,\ldots,\vchat_m$
\end{flushleft}
\begin{algorithmic}[1]
\State $\cG\eq$ almost fully connected directed graph of Section~\ref{sub:decoding_order}
\State $\cG'\eq$ MDST of $\cG$
\State $\cN\eq\cbrace{r}$ \Comment{BFS queue}
\While{$\cN\neq\emptyset$}
\State $i\eq \text{pop}\paren{\cN}$ \Comment{Extract the head of $\cN$}
\For{all $j$ s.t. $\paren{i,j}\in\cE'$}
\If{$i=r$}
\State $\vchat_j\eq$ Decode $j$ without Noise Recycling\label{line:no_noise_recycling}
\Else
\State $\vchat_j\eq$ Decode $j$ with Noise Recycling of $i$\label{line:noise_recycling}
\EndIf
\State $\text{Add}\paren{\cN,j}$ \Comment{Add $j$ to the end of $\cN$}
\EndFor
\EndWhile
\end{algorithmic}
\end{algorithm}

\begin{lemma}\label{lem:mdst}
For arbitrary channels with arbitrary correlation $\rho_{i,j}$
between orthogonal channels $i,j$, Algorithm~\ref{alg:general_noise}
determines an order that maximizes the sum of effective SNR when
every orthogonal channel is decoded once.
\end{lemma}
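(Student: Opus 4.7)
The plan is to reduce the optimization to a maximum directed spanning tree (arborescence) problem and invoke the correctness of Edmonds' algorithm. The key conceptual step is to set up a bijection between feasible Noise Recycling schedules and spanning arborescences rooted at the zero node of $\cG$, and to verify that the objective (sum of effective SNRs over the $m$ decodings) coincides with the total edge weight of the corresponding arborescence.

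First I would characterize a feasible schedule. Since each channel is decoded exactly once, channel $j$ contributes exactly one term to the total effective SNR: either $\SNR_j$, if $j$ is decoded without recycling, or $\SNR_{i,j}$ for some unique $i$ whose decoded output was used. Encode this choice by giving $j$ an incoming edge from the zero node in the first case and from $i$ in the second. This assigns every non-zero node exactly one incoming edge, so the resulting subgraph $\cG^{\star}$ has $m$ edges, and its total weight equals $\sum_j \SNR_j$ or $\sum_j \SNR_{i(j),j}$, which is precisely the total effective SNR of the schedule. Causality, namely that $i$ must already have been decoded before being used to recycle for $j$, forces $\cG^{\star}$ to be acyclic: any directed cycle among the non-zero nodes would require a channel to be decoded before itself. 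Combined with the one-incoming-edge property and the fact that the zero node has no incoming edges, $\cG^{\star}$ is a spanning arborescence of $\cG$ rooted at the zero node.

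Conversely, any spanning arborescence rooted at the zero node induces a feasible schedule: traversing it in BFS order (as in Algorithm~\ref{alg:general_noise}) ensures that whenever a non-root node $j$ with parent $i\ne r$ is decoded, $i$ has already been decoded and $\vzhat_i$ is available for use in line~\ref{line:noise_recycling}; children of the root are decoded without recycling in line~\ref{line:no_noise_recycling}. Hence the set of achievable sums of effective SNRs is exactly the set of weights of spanning arborescences of $\cG$ rooted at the zero node.

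Maximizing the sum of effective SNRs therefore reduces to finding a maximum-weight spanning arborescence rooted at the zero node of $\cG$, which is precisely what Edmonds' algorithm~\cite{edmonds1967optimum} computes. Since Algorithm~\ref{alg:general_noise} uses the MDST output by Edmonds' algorithm and decodes along a BFS order consistent with it, the sum of effective SNRs it achieves is maximum among all feasible schedules, which proves the lemma. The only subtle point, and the main thing I would want to verify carefully, is that restricting to arborescences with the zero node as root is without loss of generality; this is immediate here because the zero node has no incoming edges in $\cG$ by construction, so every spanning arborescence of $\cG$ must be rooted there.
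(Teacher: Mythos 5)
Your proposal is correct and follows essentially the same route as the paper: both reduce the problem to finding a maximum directed spanning tree of the auxiliary graph $\cG$ and invoke Edmonds' algorithm. The paper's own proof is considerably terser---it simply asserts that optimality ``follows by the definition of a MDST''---whereas you supply the one step the paper leaves implicit, namely the bijection between feasible decoding schedules and spanning arborescences rooted at the zero node (via the one-incoming-edge and causality/acyclicity observations), which is exactly what justifies that maximizing over arborescences is the same as maximizing over all schedules.
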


\begin{proof}
The proof stems directly from the properties of a MDST. We have to prove the following:
\begin{enumerate}
    \item At least one channel is decoded without Noise Recycling.\label{prop:no_noise_recycling}
    \item Each orthogonal channel is decoded exactly once.\label{prop:decoded_once}
    \item The decoding order yields maximum total SNR.\label{prop:max_snr}
\end{enumerate}
Property~\ref{prop:no_noise_recycling} is satisfied as $r$ has at least one outgoing edge in $\cG'$, as a path from $r$ to $i$ exists for every $i\neq r$. Property~\ref{prop:decoded_once} is satisfied as $\cG'$ is a Directed Spanning Tree; Every node is reached, and no node can be reached more than once, as a tree has no cycles. Property~\ref{prop:max_snr} follows by the definition of a MDST.
\end{proof}

\subsection{Achievable Region}\label{subsec:achievable_region}
While Lemma~\ref{lem:mdst} holds in general, we present the
main theorem for jointly Gaussian noise. By improving the effective
SNR, a larger rate region is achievable than in the case where Noise
Recycling is not used.
\begin{theorem}\label{LB}
Assume a noise model with fixed $m$ orthogonal jointly Gaussian
channels, each with variance $\sigma_j^2$. For a given average power
constraint, $\mathbb{E}\paren{X_j^2}\leq P_j$, and any correlation
factor $\abs{\rho_j}<1$, the following region is achievable:
\begin{align*}
R_j<C\paren{\frac{P_j}{\paren{1-\rho_j^2}\sigma_j^2}}
\end{align*}
where
\begin{equation*}
\rho_j=\begin{cases}
\rho_{\pj,j} & \pj\neq r \\
0 & \pj=r
\end{cases},
\end{equation*}
$r$ is the root of the MDST (the zero node), $\pj$ is the parent
of $j$ in the MDST, $P_j/\sigma_j^2$ is the SNR and
$C\paren{P_j/\sigma_j^2}=1/2\log(1+P_j/\sigma_j^2)$ is the resulting
capacity.
\end{theorem}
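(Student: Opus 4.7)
The plan is to induct over the MDST in the BFS order used by Algorithm~\ref{alg:general_noise}, showing that under the event that all previously decoded channels have been decoded correctly, each subsequent channel is effectively an AWGN channel whose noise variance has been reduced by the recycling step. Achievability then follows from the standard Gaussian channel coding theorem applied to each channel independently, combined with a union bound over the $m$ channels.

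First I would handle the root channel $r$ (with $\rho_r=0$ by convention). Since no noise estimate is recycled into $r$, it sees the original noise $\vZ_r\sim\mathcal{N}(0,\sigma_r^2 I)$, so by Shannon's theorem any rate $R_r<C(P_r/\sigma_r^2)$ is achievable with BLER vanishing in the code length. For any non-root $j$ with parent $\pj$, consider the event $\cE_{\pj}$ that $\pj$ has been decoded correctly; under this event, the noise estimate satisfies $\vzhat_{\pj}=\vy_{\pj}-\vx_{\pj}=\vZ_{\pj}$, so the updated output is
\begin{align*}
\vY_j' = \vx_j + \vZ_j - \rho'_{\pj,j}\vZ_{\pj}.
\end{align*}
Here the residual noise $\vZ_j-\rho'_{\pj,j}\vZ_{\pj}$ is, component-wise, a linear combination of jointly Gaussian variables, and hence Gaussian with mean zero and variance $(1-\rho_j^2)\sigma_j^2$ by the computation in equation~\eqref{eq:varred}. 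Moreover, by the orthogonality principle of the LLSE (which coincides with the MMSE in the Gaussian case), the residual is uncorrelated with, and therefore independent of, $\vZ_{\pj}$; since the noise vectors in other channels are generated i.i.d.\ across coordinates, the residual noise across the block of channel $j$ is i.i.d.\ Gaussian. Thus conditioned on $\cE_{\pj}$, channel $j$ is an AWGN channel with effective SNR $P_j/((1-\rho_j^2)\sigma_j^2)$, and any rate $R_j<C(P_j/((1-\rho_j^2)\sigma_j^2))$ is achievable with vanishing conditional BLER.

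I would then close the induction by a union bound across the BFS traversal. Let $\cE_j$ denote correct decoding of channel $j$; by the inductive hypothesis each channel has conditional BLER $\epsilon_j(n)\to 0$ for any rate strictly below the claimed bound, so
\begin{align*}
P\paren{\bigcup_{j=1}^{m}\comp{\cE_j}} \leq \sum_{j=1}^{m}\epsilon_j(n)\to 0
\end{align*}
as $n\to\infty$. Hence the entire rate tuple $(R_1,\ldots,R_m)$ satisfying $R_j<C(P_j/((1-\rho_j^2)\sigma_j^2))$ for every $j$ is simultaneously achievable.

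The main subtlety I expect is the independence claim in the inductive step: one must be careful that the residual at channel $j$ is not just uncorrelated with $\vZ_{\pj}$ but genuinely independent of all noise realizations governing previously decoded channels, so that the conditioning on the event $\bigcap_{k\text{ decoded before }j}\cE_k$ does not distort the Gaussian distribution used to invoke the channel coding theorem. Jointly Gaussian noise makes this benign, since the LLSE residual is independent of the conditioning variables, but in a non-Gaussian setting this step would fail and only the effective-SNR statement (Lemma~\ref{lem:mdst}) would remain.
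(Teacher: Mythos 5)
Your proposal is correct and follows essentially the same route as the paper: decode sequentially along the MDST, use the jointly Gaussian decomposition $Z_j=\rho'_{\pj,j}Z_{\pj}+\Xi_{\pj,j}$ with $\Xi_{\pj,j}$ independent of $Z_{\pj}$ to reduce channel $j$ to an effective AWGN channel of variance $(1-\rho_j^2)\sigma_j^2$ conditioned on correct ancestor decoding, and combine via a union bound. The only difference is one of packaging --- the paper re-runs the random-coding/joint-typicality argument explicitly and verifies $\mutualinf{X_j}{Y_j\mid X_{\pj},Y_{\pj}}=\mutualinf{X_j}{Y_j'}$, whereas you invoke the Gaussian coding theorem as a black box; the independence subtlety you flag at the end is present, and handled at the same level of rigor, in the paper's own proof.
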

\begin{proof}
The proof is given in Appendix~\ref{sec:code}.
\end{proof}

Note that in the case of joint
Gaussianity, the $j$-th noise vector $\vZ_j$ can be generated in
the following way, $Z_{j}(l)=\rho'_{i,j} Z_{i}(l)+\Xi_{i,j}(l)$,
where $Z_{i}(l)\sim\cN\paren{0,\sigma^2}$ and
$\Xi_{i,j}(l)\sim\cN\paren{0,\paren{1-\rho^2}\sigma^2}$ are independent
Gaussians.

 The Gauss-Markov (GM) process has been
used to model progressive decorrelation of fading  with growing
separation among channels \cite{NY07, AMM05, MG02, AG07, CZ04, Med00} in time,
frequency, or both, according to, say, Jakes's model \cite{Jak74}.
Mapping fading to an equivalent noise model leads naturally to a GM model of noise ~\cite{anderson2012optimal, Gob77, Mar94, Bro83, Mag65, Chr05, Kav99, Poo83}  to represent these effects.

\begin{corollary}

When the noise is GM, i.e. a symmetric
multivariate Gaussian with $\sigma_j=\sigma$, $\rho_{i,j}=\rho^{|i-j|}$, and equal average transmission power constraint $P_j=P$,
we get the following result: $\pj=j-1$ (where the root $r$ is the
zero node) and the achievable rates are
\begin{align*}
R_1<C\paren{\frac{P}{\sigma^2}},\quad R_j<C\paren{\frac{P}{\paren{1-\rho^2}\sigma^2}}, \text{ for all } j>1,
\end{align*}
where  $P/\sigma^2$ is the SNR and
$C\paren{P/\sigma^2}=1/2\log(1+P/\sigma^2)$ is the capacity.
\end{corollary}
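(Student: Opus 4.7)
The plan is to derive the corollary as a direct specialization of Theorem~\ref{LB}, with the only non-trivial work being to identify a Maximum Directed Spanning Tree (MDST) for the symmetric Gauss-Markov weighted graph; the achievable rates then follow by substitution.

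First I would specialize the edge weights of the graph $\cG$ defined in Section~\ref{sub:decoding_order}. Under the symmetric assumptions $\sigma_j=\sigma$, $P_j=P$, and $\rho_{i,j}=\rho^{|i-j|}$, the weights reduce to $w_{i,j}=P/\paren{\paren{1-\rho^{2|i-j|}}\sigma^2}$ for $i,j\neq 0$ and $w_{0,j}=P/\sigma^2$ for all $j$. Because $1-\rho^{2|i-j|}$ is strictly increasing in $|i-j|$ when $\rho\neq 0$, the best non-zero parent of any node $j$ is a distance-one neighbour, yielding weight $P/\paren{\paren{1-\rho^2}\sigma^2}$, which strictly exceeds the $P/\sigma^2$ obtainable by making the zero node a parent.

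Next I would bound the cost of an arbitrary directed spanning tree rooted at the zero node. The root must have at least one child, each such child contributes weight $P/\sigma^2$, and every other non-root node contributes at most $P/\paren{\paren{1-\rho^2}\sigma^2}$. Hence the total cost is at most
\begin{align*}
\frac{P}{\sigma^2} + \paren{m-1}\frac{P}{\paren{1-\rho^2}\sigma^2},
\end{align*}
with equality possible only when exactly one node is a child of the zero node and every other node has a distance-one parent. The chain $0\to 1\to 2\to\cdots\to m$ realises both conditions simultaneously and so attains the bound; it is therefore an MDST and induces $\pi(1)=r$ together with $\pj=j-1$ for $j>1$.

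Finally I would invoke Theorem~\ref{LB} with this tree. For $j=1$, $\pj=r$ gives $\rho_1=0$ and hence $R_1<C\paren{P/\sigma^2}$; for $j>1$, $\rho_j=\rho_{j-1,j}=\rho^{|(j-1)-j|}=\rho$, so $R_j<C\paren{P/\paren{\paren{1-\rho^2}\sigma^2}}$, which is the stated result. The only delicate step is the MDST optimality argument above, and even that is essentially a counting argument because each non-root node contributes exactly one incoming edge and the per-edge maxima are explicit. Note that a reverse chain is an equally valid MDST by symmetry, but both yield the same rate region, so stating $\pj=j-1$ loses no generality.
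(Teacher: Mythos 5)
Your proof is correct and follows exactly the route the paper intends: the paper states this corollary without an explicit proof, treating it as an immediate specialization of Theorem~\ref{LB}, and your counting argument identifying the chain $0\to 1\to\cdots\to m$ as an MDST (each non-root node has one incoming edge, at most one need come from the zero node, and distance-one parents maximize every remaining edge weight) is precisely the detail being left implicit. Your closing remark is also consistent with Fig.~\ref{fig:gm_chain}, which shows that any tree whose non-root edges all join adjacent channels (e.g.\ leading from an interior channel and propagating in both directions) attains the same cost and rate region.
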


The decoding order of a GM model is shown in Fig.~\ref{fig:gm_chain}
where the leading channel is $i=1$. The achievable rates are depicted
in Fig.~\ref{fig:capacity}, which illustrates the rate that can be
gained by Noise Recycling when compared to the case where channel
decoders operate independently. It is evident that there is a gap
between the single-channel capacity and the average rate that can
be achieved by decoders employing Noise Recycling. In particular,
there is a significant rate-gain even when the number of orthogonal
channels, $m$, or the noise correlation, $|\rho|$, is low.

We comment that Theorem~\ref{LB} can be naturally expanded to FDMA
and TDMA, where the rates have to be adjusted as in~\cite[Chapter
15.3.6]{cover2012elements}.

\begin{figure}
    \centering
    \ifdouble
    \includegraphics[width=1 \columnwidth]{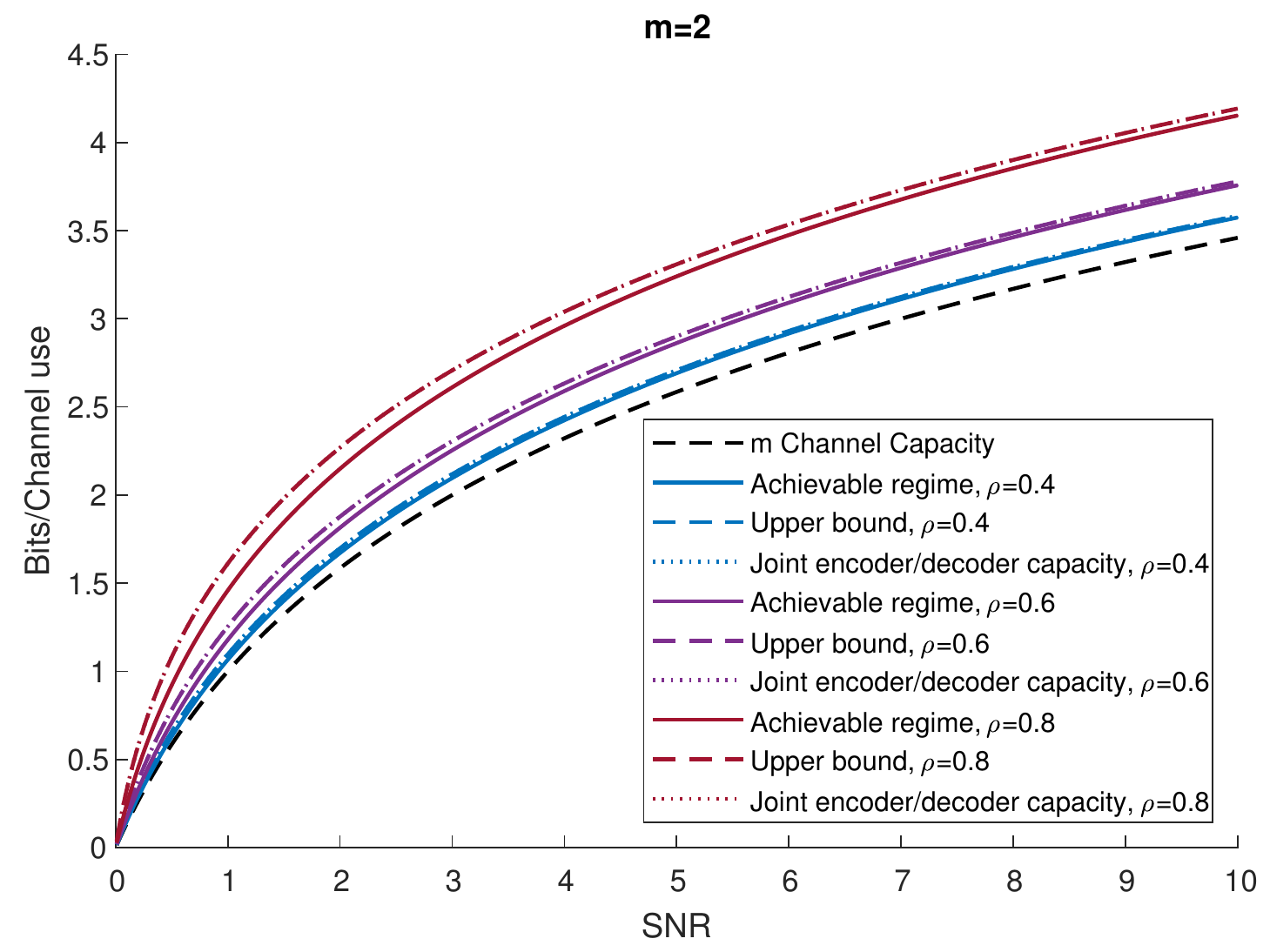}
    \else
    \includegraphics[width=0.6 \columnwidth]{UP}
    \fi
\caption{Upper bound vs. achievable regime for $m=2$ orthogonal
channel. The capacity of $m$ channels is $C_m= m\cdot C\paren{P/\sigma^2}$.
Achievable regime is the sum rate using Theorem~\ref{LB}. Upper
bound is the sum rate using the bound given in
Theorem~\ref{eq:upper_bound}. The cross channel joint encoder-decoder
channel capacity is $C_m =
\max_{\frac{1}{m}tr(\Lambda_{\underline{X}})\leq P}\frac{1}{2}\log
\left(\frac{\left| \Lambda_{\underline{X}} + \Lambda_{\underline{Z}}
\right|}{\left|\Lambda_{\underline{Z}} \right|} \right)$, where
$\Lambda_{\underline{X}}$ is the cross-correlation matrix of the
input energies and $\Lambda_{\underline{Z}}$ is the cross-correlation
matrix of the correlated noise, see \cite[Section 9.5]{cover2012elements}.}
    \vspace{1pt}
    \label{fig:UP}
\end{figure}

\subsection{Upper Bound on Rate Region}\label{sec:UpperBound}

We now provide an upper bound on the rate region for orthogonal
channels with noise correlation $\rho_{i,j}$ in order to determine
what, if anything, is lost by our use of independent decoders with
Noise Recycling when compared with joint encoding and decoding.
Fig.~\ref{fig:UP} shows the upper bound rate region given in
Theorem~\ref{eq:upper_bound} (below). Moreover, we compare the upper
bound to the achievable regime using the Noise Recycling decoding
scheme provided in Theorem~\ref{LB}, the case where channel decoders
operate independently as given in \cite[Section 9.4]{cover2012elements},
and the case where the encoders may cooperate and there is a joint
decoder as given in \cite[Section 9.5]{cover2012elements}.

For a GM model, it is evident from the comparison presented in
Fig.~\ref{fig:UP} that there is a small gap that increases with
$\rho$ between the upper bound and the achievable regime using the
proposed Noise Recycling scheme. Yet, we note that for noise
correlation lower than 0.6 or for SNRs highers than 6, the bounds
almost match.  Furthermore, it is important to note that the upper
bound provided in this section for the case that the encoders are
independent, as considered in this paper, match the capacity of the
model where the encoders may cooperate and there is a joint decoder.
Recall that we consider a model where $Z_j\sim\cN\paren{0,\sigma^2}$
for each $j$-th orthogonal channel, as given in Section~\ref{sec:model}.

\begin{figure}
    \centering
    \ifdouble
    \includegraphics[width=1 \columnwidth]{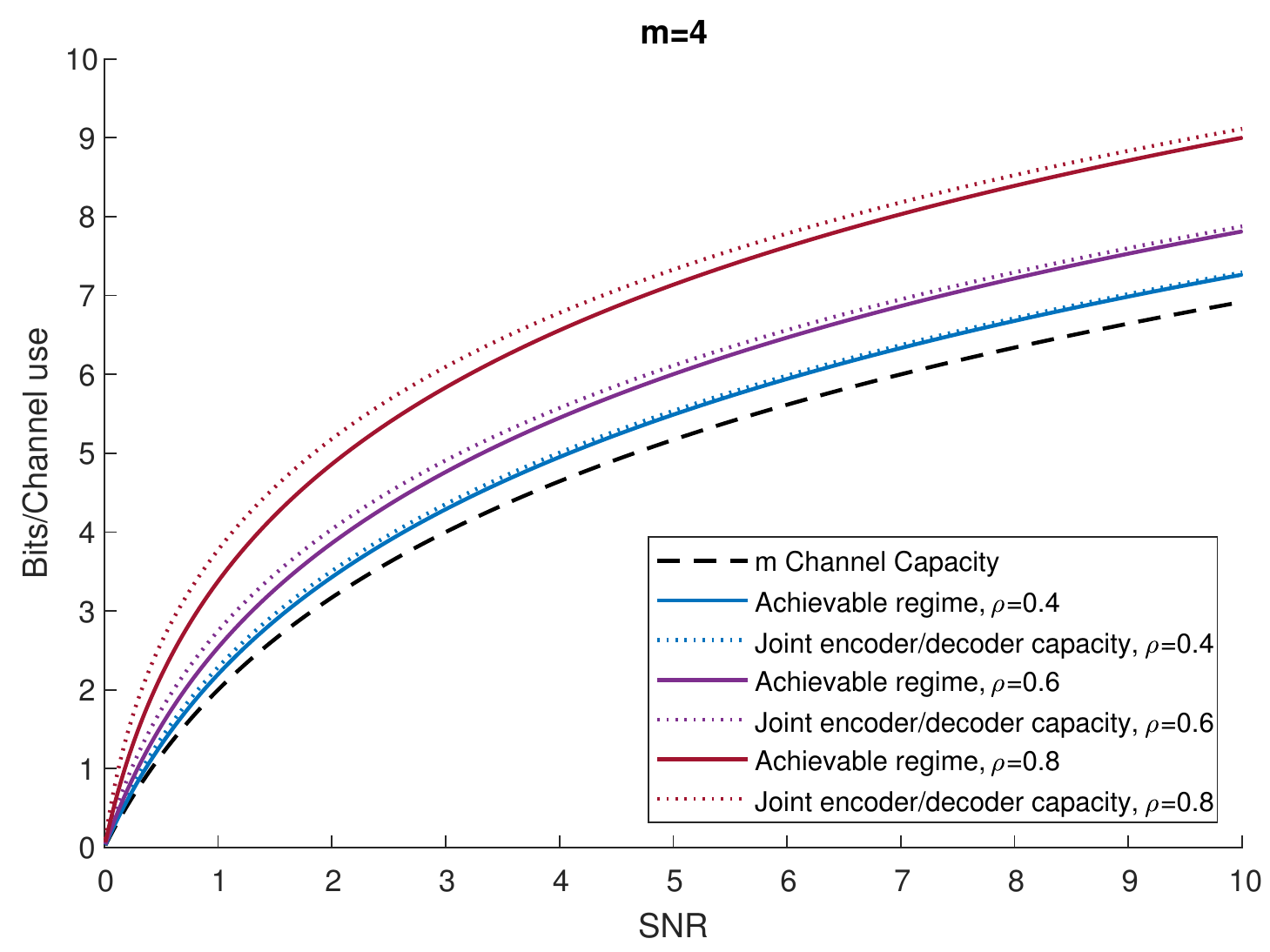}
    \else
    \includegraphics[width=0.6 \columnwidth]{capacity_m4_1}
    \fi
    \caption{Capacity and achievable region for $m=4$. We see that the achievable region is still close to the capacity, even when the number of channels increase.}
    \label{fig:UP2}
\end{figure}

\begin{theorem}\label{eq:upper_bound}
Assume a correlated noise model with fixed $m$ orthogonal channels,
each with variance $\sigma_{j}^{2}$, and correlation $\rho_{i,j}$
between each $(i,j)$ pair of orthogonal channels. For independent encoders
with a given average power constraint, $\mathbb{E}\paren{X_j^2}\leq
P_j$, and any correlation factor $\abs{\rho_{i,j}}<1$, the capacity
of any pair of orthogonal channels is upper bounded by,
\ifdouble
\begin{eqnarray}
    && \hspace{-0.7cm} C \leq \frac{1}{2}\log\left(1+\frac{P_i}{\sigma_{i}^{2}}\right) + \frac{1}{2}\log\left(\frac{P_j+\sigma_{j}^{2}}{P_j+\paren{1-\rho_{i,j}^{2}}\sigma_{j}^2}\right)\nonumber\\
    && \hspace{0.8cm} + \frac{1}{2}\log\left(1-\tilde{\rho}_{i,j}^2\right) + \frac{1}{2}\log \left(1+\frac{P_j}{\paren{1-\rho_{i,j}^2}\sigma_{j}^{2}}\right), \nonumber
\end{eqnarray}
\else
\[
     C \leq \frac{1}{2}\log\left(1+\frac{P_i}{\sigma_{i}^{2}}\right) + \frac{1}{2}\log\left(\frac{P_j+\sigma_{j}^{2}}{P_j+\paren{1-\rho_{i,j}^{2}}\sigma_{j}^2}\right)
     + \frac{1}{2}\log\left(1-\tilde{\rho}_{i,j}^2\right) + \frac{1}{2}\log \left(1+\frac{P_j}{\paren{1-\rho_{i,j}^2}\sigma_{j}^{2}}\right),
\]
\fi
where
\begin{align*}
\tilde{\rho}_{i,j}=\rho_{i,j}\frac{\sigma_i\sigma_j}{\sqrt{(P_i+\sigma_{i})(P_j+\sigma_{j})}}.
\end{align*}
\end{theorem}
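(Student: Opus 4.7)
The plan is to express the capacity with independent encoders and joint decoding as $C=\sup_{p(x_i)p(x_j)} I(X_i,X_j;Y_i,Y_j)$, subject to $\mathbb{E}[X_i^2]\le P_i$, $\mathbb{E}[X_j^2]\le P_j$, and then bound each piece after a chain-rule split. First I would write
\[
I(X_i,X_j;Y_i,Y_j)=h(Y_i,Y_j)-h(Z_i,Z_j)=\bigl[h(Y_i)-h(Z_i)\bigr]+\bigl[h(Y_j\mid Y_i)-h(Z_j\mid Z_i)\bigr],
\]
where I use that $(Y_i,Y_j)\mid(X_i,X_j)\sim(Z_i,Z_j)$. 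This splits the bound into a single-channel piece and a conditional piece, which is exactly the two-block structure of the claimed inequality.

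For the first bracket I would invoke the standard Gaussian bound: $h(Y_i)\le\tfrac{1}{2}\log(2\pi e(P_i+\sigma_i^2))$ because $\mathbb{E}[Y_i^2]\le P_i+\sigma_i^2$, while $h(Z_i)=\tfrac{1}{2}\log(2\pi e\sigma_i^2)$ under the jointly Gaussian noise model used throughout the paper; this yields the $\tfrac{1}{2}\log(1+P_i/\sigma_i^2)$ term. For the second bracket, $h(Z_j\mid Z_i)=\tfrac{1}{2}\log(2\pi e(1-\rho_{i,j}^2)\sigma_j^2)$ is a direct Gaussian computation. The key step is upper bounding $h(Y_j\mid Y_i)$: conditional differential entropy is maximized, for a given conditional variance, by a Gaussian, and for arbitrary (independent) $X_i,X_j$ the conditional variance is at most the linear MMSE $\mathrm{Var}(Y_j)\bigl(1-\tilde\rho_{i,j}^2\bigr)$, with $\mathrm{Var}(Y_j)=P_j+\sigma_j^2$, $\mathrm{Cov}(Y_i,Y_j)=\rho_{i,j}\sigma_i\sigma_j$ (using $X_i\perp X_j$ and $X_\cdot\perp Z_\cdot$), and $\tilde\rho_{i,j}$ as defined in the theorem. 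Hence $h(Y_j\mid Y_i)\le\tfrac{1}{2}\log\bigl(2\pi e(P_j+\sigma_j^2)(1-\tilde\rho_{i,j}^2)\bigr)$.

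Assembling,
\[
h(Y_j\mid Y_i)-h(Z_j\mid Z_i)\le\tfrac{1}{2}\log\bigl(1-\tilde\rho_{i,j}^2\bigr)+\tfrac{1}{2}\log\!\frac{P_j+\sigma_j^2}{(1-\rho_{i,j}^2)\sigma_j^2},
\]
and the last logarithm factors as $\tfrac{1}{2}\log\!\frac{P_j+\sigma_j^2}{P_j+(1-\rho_{i,j}^2)\sigma_j^2}+\tfrac{1}{2}\log\!\bigl(1+P_j/((1-\rho_{i,j}^2)\sigma_j^2)\bigr)$, delivering the remaining three terms of the stated bound. The main obstacle I anticipate is the justification that one may replace $\mathrm{Var}(Y_j\mid Y_i)$ by its linear MMSE counterpart in the entropy bound without assuming $X_i,X_j$ are Gaussian: this is the Gaussian-maximizes-entropy inequality applied after translating $Y_j$ by the linear estimator of $Y_j$ from $Y_i$, together with the fact that independence of the encoders pins down the covariance of $(Y_i,Y_j)$ irrespective of the input distributions. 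The rest is algebraic rearrangement and a check that the covariance computations are valid even when only second-moment constraints (not Gaussianity) hold on the inputs.
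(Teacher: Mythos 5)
Your proposal is correct and arrives at exactly the stated bound, but it organizes the argument differently from the paper, and the difference is worth spelling out. The paper also starts from $I(X_i,X_j;Y_i,Y_j)$ with product inputs and peels off the lead channel first, but it routes the conditional piece through the noise-recycled channel: it uses $I(X_j;Y_j\mid X_i,Y_i)=I(X_j;Y_j')$ with $Y_j'=X_j+\Xi_{i,j}$ (the identity \eqref{rate_j} from the achievability appendix), expands $I(X_i;Y_j\mid Y_i)=h(Y_j)-I(Y_j;Y_i)-h(Y_j')$, and then bounds $I(X_i;Y_i)$, $h(Y_j)$ and $I(X_j;Y_j')$ separately by their Gaussian values while evaluating $I(Y_j;Y_i)=-\tfrac12\log(1-\tilde\rho_{i,j}^2)$ as for jointly Gaussian outputs. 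Your decomposition $\bigl[h(Y_i)-h(Z_i)\bigr]+\bigl[h(Y_j\mid Y_i)-h(Z_j\mid Z_i)\bigr]$ is term-by-term equivalent (since $h(Z_j\mid Z_i)=h(Y_j')-I(X_j;Y_j')$ and $h(Y_j\mid Y_i)=h(Y_j)-I(Y_i;Y_j)$), but you bound the key quantity $h(Y_j\mid Y_i)$ in one shot via $h(Y_j\mid Y_i)\le h(Y_j-\alpha Y_i)\le\tfrac12\log\paren{2\pi e\,\var{Y_j-\alpha Y_i}}$ at the LMMSE coefficient. That is in fact the more careful treatment of the one delicate step: for non-Gaussian inputs $(Y_i,Y_j)$ need not be jointly Gaussian, so the paper's step (c), which treats $I(Y_i;Y_j)=-\tfrac12\log(1-\tilde\rho_{i,j}^2)$ as an identity, really requires the one-sided max-entropy/LMMSE argument you supply; what your route gives up is the explicit appearance of $I(X_j;Y_j')$, which the paper uses to compare the converse against its achievable region. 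Two small points to tidy: write $\var{Y_j}\le P_j+\sigma_j^2$ rather than equality and note that the LMMSE residual $\var{Y_j}-\mathrm{cov}(Y_i,Y_j)^2/\var{Y_i}$ is monotone in the output variances, so evaluating $\tilde\rho_{i,j}$ at $P_i+\sigma_i^2$ and $P_j+\sigma_j^2$ (the intended reading of the theorem's denominator, which has a typo dropping the squares) still yields a valid upper bound; both are immediate.
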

{\em Proof:} The proof is given in Appendix~\ref{sec:UpperBound_proof}.

The upper bound provided in Theorem~\ref{eq:upper_bound} is for any
pair of orthogonal channels. Yet we note from Fig.~\ref{fig:UP2}
that the same trend concerning the tightness of the bounds is
maintained when the number of orthogonal channels increases, comparing
the achievable regime using Noise Recycling to the capacity with
encoders that may cooperate and a joint decoder is used.

\section{Noise Recycling BLER Improvement}\label{sec:reliability_gain}

In Section~\ref{sec:rate_gain} we determined the rate-gains available
from Noise Recycling through the use of random codebooks and joint
typicality. Here we illustrate that BLER performance can be enhanced
by Noise Recycling when used with existing codes and decoders.
Simulations employ a jointly Gaussian channel model with Binary
Phase Shift Keying (BPSK) modulation.  We demonstrate the technique
with a diverse range of codes and decoders in terms of sizes and
rates to illustrate the general utility of the method.

\begin{figure}
    \centering
    \ifdouble
    \includegraphics[width=1 \columnwidth]{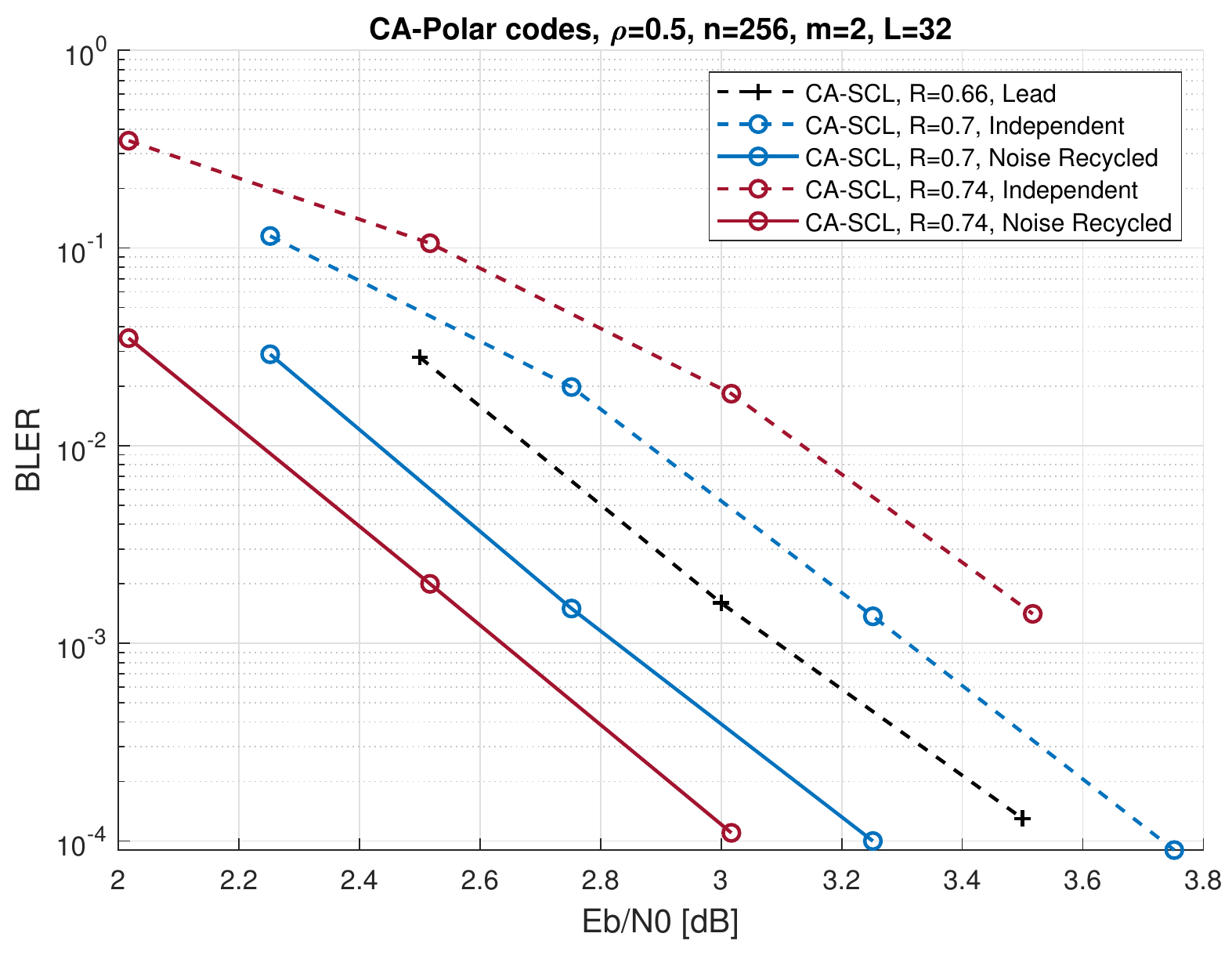}
    \else
    \includegraphics[width=0.6 \columnwidth]{bler_capolar_n_256_diff_rate_2}
    \fi
    \caption{BLER vs. Eb/N0 for $256$ bit CA-Polar codes decoded
    with CA-SCL, uses a list size of $L=32$, with and without Noise
    Recycling. Dashed lines correspond to independent decoding, and
    solid lines to decoding after Noise Recycling. The lead orthogonal
    channel is encoded with a rate $2/3$ code. The second channel
    uses either a rate $0.7$ or $0.74$ code.}
    \label{fig:bler_diff_rate_m_2}
    \vspace{-0.15in}
\end{figure}

For codes, we use:
\begin{itemize}
\item
CRC-Aided Polar (CA-Polar) Codes~\cite{tal2011list,tal2015list},
which are Polar codes~\cite{arikan2008channel,arikan2009channel}
with an outer CRC code. These have been proposed for all control
channel communications in 5G NR~\cite{3gppcapolar1}.
\item
Low Density Parity Check (LDPC) codes~\cite{gallager1962low}, which
have been proposed for all data channel communications in 5G
NR~\cite{3gppcapolar1} and typically have long code-lengths.
\item
Random Linear Codes (RLCs), which have long-since been known to be
capacity achieving~\cite{gallager1973random}, but have been little
investigated owing to the lack of availability of decoders until
the recent development of Guessing Random Additive Noise Decoding
(GRAND) \cite{duffy2018guessing,duffy2019capacity,duffy2019SRGRAND,duffy20195g}.
\end{itemize}
For decoders, we use the state-of-the-art CA-Polar-specific CRC-Aided
Successive Cancellation List decoder (CA-SCL)
~\cite{tal2011list,niu2012crc,tal2015list,balatsoukas2015llr,liang2016hardware}
and the LDPC decoder using Belief Propagation~\cite{gallager1962low},
both as implemented in MATLAB's Communications Toolbox. We also use
two soft-information GRAND variants \cite{solomon2020,duffy2020},
which are well suited to short, high-rate codes. While most decoders
are tied to specific code-book constructions, both of these can
decode any block code, including CA-Polar codes and RLCs.

\begin{figure}
    \centering
    \ifdouble
    \includegraphics[width=1\columnwidth]{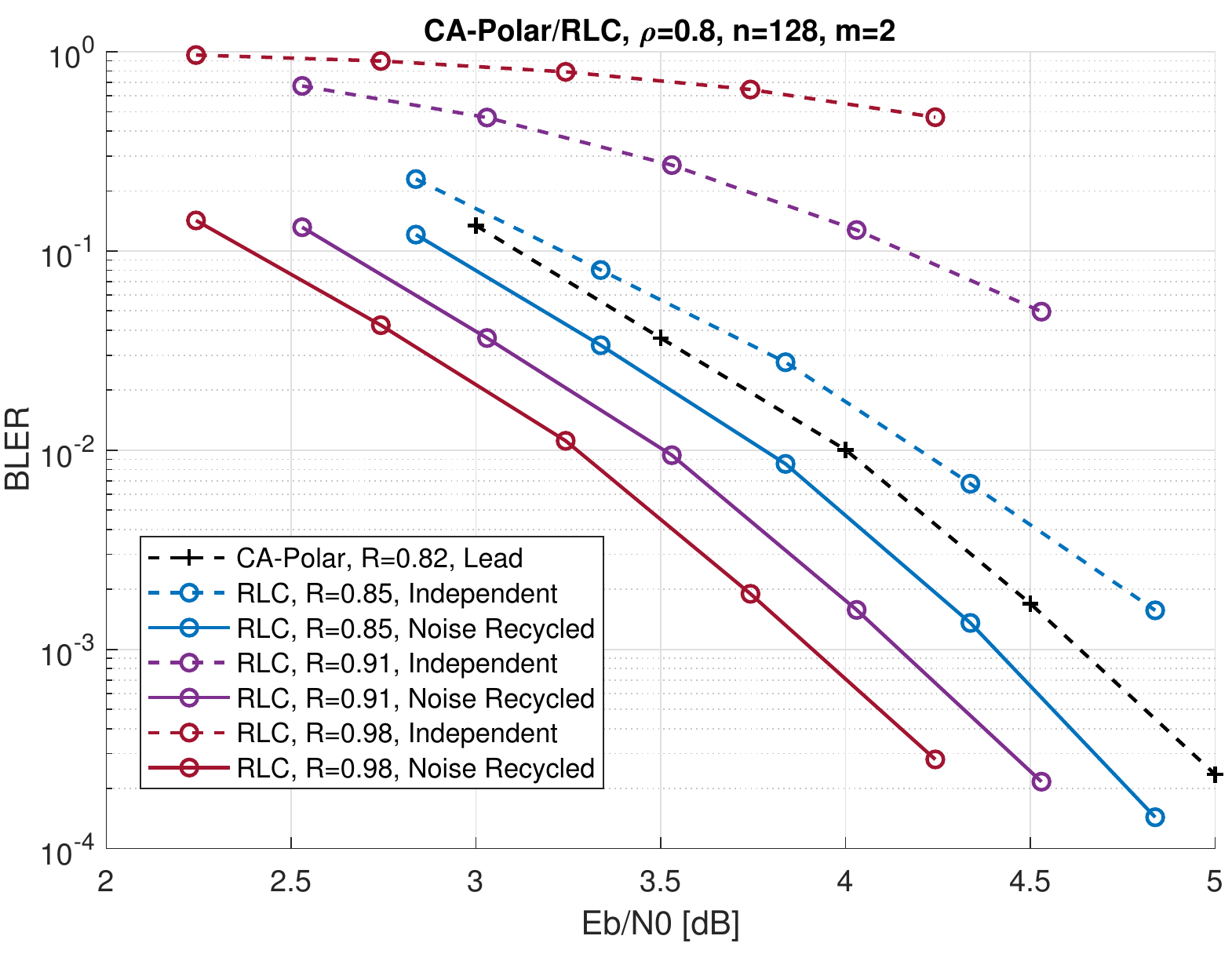}
    \else
    \includegraphics[width=0.6\columnwidth]{bler_caploar_rlc_n_128_v3}
    \fi
    \caption{BLER vs. Eb/N0 for codes of length $n=128$
    decoded with ORBGRAND with and without Noise Recycling. Dashed lines correspond to independent decoding,
    and solid lines to decoding after Noise Recycling. Data on the
    lead orthogonal channel is encoded with a rate $0.82$ CA-Polar code. The
    second channel uses rate $0.85$, $0.91$ or $0.98$ RLCs.}
    \label{fig:bler_diff_rate_m_2_capolar_rlc}
    \vspace{-0.2in}
\end{figure}

\subsection{Static Noise Recycling}
\label{sub:predetermined_order}

We first consider a sequential decoding scheme akin to the one
described in Section~\ref{sec:rate_gain} where a lead channel is
selected {\it a priori} and decoded. A subsequent channel that has a
higher rate is then decoded using noise recycled information.
Block-errors are counted separately on each channel.

In the first simulation, the lead channel encodes its data using
a CA-Polar code $\sbrace{256,170}$ with rate $R_1\approx 2/3$. The
second, orthogonal channel uses a higher rate CA-Polar code, either
$\sbrace{256,180}$ or $\sbrace{256,190}$ giving $R_2\approx 0.7$
or $0.74$ respectively. Noise variances are the same on both channels
and the noise correlation, which is set to $\rho=0.5$, is known to
the second decoder. Both channels are decoded with CA-SCL, with the
second channel benefiting from Noise Recycling.

\begin{figure*}
    \centering
    \ifdouble
    \includegraphics[width=2.05 \columnwidth]{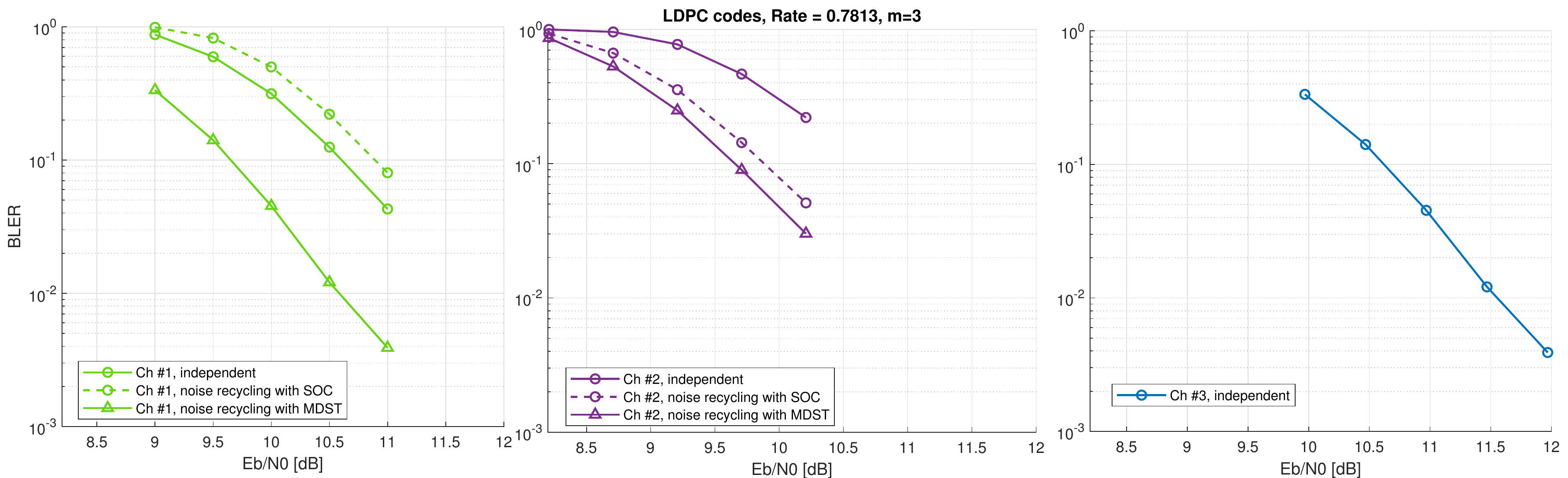}
    \else
    \includegraphics[width=1 \columnwidth]{LDPC_diff_var_energy5_static}
    \fi
    \caption{BLER vs Eb/N0 for $\sbrace{12800,10000}$ LDPC codes
    decoded with Belief Propagation and static Noise Recycling,
    for $m=3$ orthogonal channels. The  correlation values between
    the different channels are $\rho_{1,2}=0.6$, $\rho_{1,3}=0.8$,
    and $\rho_{2,3}=0.4$. The values of the variance of the orthogonal
    channels are $\sigma_1^2=\sigma_f^2$,
    $\sigma_2^2=1.2\,\sigma_f^2$ and $\sigma_3^2=0.8\,\sigma_f^2$.
    The third channel is chosen as the leading channel, while the order of
    subsequent channels are chosen statically using either Algorithm 1 (MDST), or a simple sequential orthogonal channel (SOC) order.}
    \label{fig:cov_unnorm}
    \vspace{-0.2in}
\end{figure*}

Fig.~\ref{fig:bler_diff_rate_m_2} reports BLER vs Eb/N0 for these
medium length codes. The black dashed line corresponds to the
lead channel, while the dashed blue and red lines give the performance
curves should Noise Recycling not be used, corresponding to independent
decoding of all channels.  As the second orthogonal channel runs
at a higher rate than the lead channel, if decoded independently
the second channel would experience higher BLER than the lead
channel. The solid blue and red lines report the performance of the
second decoder given noise recycling. Despite using a higher rate
code than the lead channel, with Noise Recycling the second channel
experiences better BLER vs Eb/N0 performance. Notably, owing to the
better Eb/N0 (i.e. the energy per {\bf information} bit used in the
transmission) that comes from running a higher rate code, the rate
$0.74$ code provides better BLER than the rate $0.7$ code. For a
commonly used target BLER of $10^{-2}$, Noise Recycling results in
$\approx 1$ dB gain for the $\sbrace{256,190}$ code.

Fig.~\ref{fig:bler_diff_rate_m_2_capolar_rlc} reports an analogous
simulation for short, high-rate codes, additionally illustrating
that the methodology is agnostic to having distinct codes on different
channels. With $\rho=0.8$, the lead channel's code is a $\sbrace{128,105}$
CA-Polar code of rate $R_1=0.82$ and the second channel is one of
three RLCs with a rate ranging from $0.85$ to $0.98$. Both channels
are decoded with the recently proposed soft detection decoder Ordered
Reliability Bits Guessing Random Additive Noise Decoding
(ORBGRAND)~\cite{duffy2020}.  As with all the GRAND algorithms, it
can decode any code, making it viable for use with RLCs, which
encompass all possible codes. A similar phenomenology to the previous
figure can be seen, where the impact of Noise Recycling is even
more dramatic, allowing the second channel code to use reliably a
much higher rate than the lead channel.

Our last simulation of this section illustrates that noise recycling
also provides BLER gains for long codes. In Section~\ref{sec:rate_gain}
we proved that MDST determined an optimal static order for decoding
in the presence of asymmetric channel conditions. Here we demonstrate
that decoding order plays a significant role in decoding performance
for heterogeneous channel conditions.

We simulated Noise Recycling for $m=3$ orthogonal channels subject
to asymmetric, correlated Gaussian noise, with correlations
$\rho_{1,2}=0.6$, $\rho_{1,3}=0.8$ and $\rho_{2,3}=0.4$. We define
a common noise factor, $\sigma_f^2$, so the noise variance of the
channels is given by $\sigma_1^2=\sigma_f^2$, $\sigma_2^2=1.2$
$\sigma_f^2$ and $\sigma_3^2=0.8$ $\sigma_f^2$. Each channel used
a long $\sbrace{12800,10000}$ 5G LDPC code, decoded with Belief
Propagation, and two static Noise Recycling orders are illustrated.
One is the optimal SNR order, as determined by
Algorithm~\ref{alg:general_noise} (MDST), that is advocated here,
which is to lead with Channel 3, then recycle noise before decoding
Channel 1, and then recycle again before decoding Channel 2.  The
other is an order that does not take the channel conditions into
account. It contrast, it again leads with Channel 3, but then
recycles noise to Channel 2, before decoding and recycling noise
to Channel 1 for its decoding.

Fig.~\ref{fig:cov_unnorm} reports BLER performance of these two
static orders. As Channel 3, which operates at the best SNR, leads
for both orders, it experiences fixed performance. Using
MDST then sees a gain of $\sim$1.25dB on Channel 1, followed by a gain
of $\sim$1dB on Channel 2. In contrast, in the non-optimal order,
while Channel 2 sees a $\sim$0.75dB gain, it is operating at the
lowest SNR and the noise that it passes for recycling to Channel
1 is deleterious, leading to $\sim$0.25dB loss. Overall, a > 2dB gain
is possible with MDST, while using an inappropriate order sees
a total gain of only 0.5dB.

\subsection{Dynamic Noise Recycling}\label{sub:dynamic_order}

While previous sections identified rate and BLER improvements that
are available from running a pre-determined lead channel with a
lower rate code so that an accurate inference of a noise realization
could be obtained to aid the signal at a higher rate second channel,
here we consider an alternate design that can lead to a significant
additional gain with both short and long codes.

\begin{figure*}
    \centering
    \ifdouble
    \includegraphics[width=2 \columnwidth]{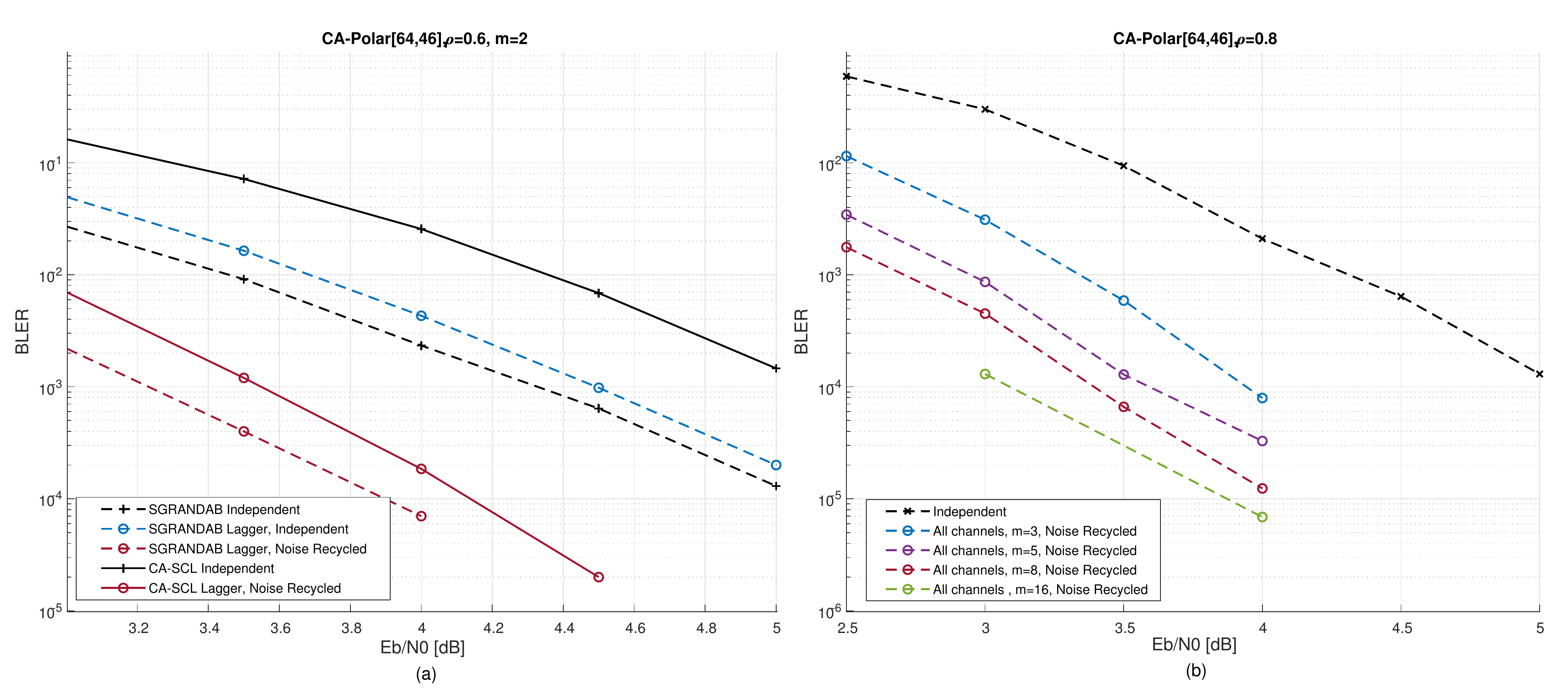}
    \else
    \includegraphics[width=1 \columnwidth]{bler_same_rate_MG_2}
    \fi
    \caption{BLER vs Eb/N0 for $\sbrace{64,46}$ CA-Polar codes with
    dynamic Noise Recycling using SGRANDAB with an abandonment
    threshold of $b=10^6$ for decoding in the first phase. In (a), either
    SGRANDAB or CA-SCL, which uses a list size of $L=32$, is used
    to decode the remaining channels after Noise Recycling, while in (b) all decoding
    is done using SGRANDAB.}
    \label{fig:bler_same_rate_m_mg}
\end{figure*}

The principle behind Dynamic Noise Recycling is that all orthogonal
channels initially attempt to decode their outputs contemporaneously.
In principle one would then wish to select the most confident
decoding to lead the Noise Recycling for that particular realization.
In practice, it may be necessary to use a post-decoding soft-information
proxy for that confidence. For example, regardless of the decoders
employed, one could select the least energetic estimated noise
sequence or the most likely noise sequence. While some decoders,
such as those based on the GRAND paradigm, themselves provide soft
information on the confidence of their decoding. The Dynamic Noise
Recycling decoding procedure with only one round of competition is
described in Algorithm~\ref{alg:same_rate}.

\begin{algorithm}
\caption{Dynamic Noise Recycling}
\label{alg:same_rate}
\begin{flushleft}
        \textbf{Input:} $\vy_1,\ldots,\vy_m$\newline
        \textbf{Output:} $\vchat_1,\ldots,\vchat_m$
\end{flushleft}
\begin{algorithmic}[1]
\State Decode orthogonal channel outputs
\State $i\gets\text{ \# of most confident decoding}$
\State $\vchat_i\gets \;i\text{-th decoded codeword}$
\State $\vxhat_i\gets \;\text{modulation of }\vchat_i$
\State $\vzhat_i\gets \vy_i-\vxhat_i$
\For{$j=1\rightarrow \max\cbrace{m-i,i-1}$}
\If{$i+j\leq m$}
\ifdouble
\State $\sbrace{\vchat_{i+j},\vzhat_{i+j}}\gets\text{DecodeAndEst.}$\newline$\paren{\vy_{i+j},\vzhat_{i+j-1},i+j-1,i+j}$
\else
\State $\sbrace{\vchat_{i+j},\vzhat_{i+j}}\gets\text{DecodeAndEst.}$$\paren{\vy_{i+j},\vzhat_{i+j-1},i+j-1,i+j}$
\fi
\EndIf
\If{$i-j\geq 1$}
\ifdouble
\State $\sbrace{\vchat_{i-j},\vzhat_{i-j}}\gets\text{DecodeAndEst.}$\newline$\paren{\vy_{i-j},\vzhat_{i-j+1},i-j+1,i-j}$
\else
\State $\sbrace{\vchat_{i-j},\vzhat_{i-j}}\gets\text{DecodeAndEst.}$$\paren{\vy_{i-j},\vzhat_{i-j+1},i-j+1,i-j}$
\fi
\EndIf
\EndFor
\State \Return $\vchat_1,\ldots,\vchat_m$
\end{algorithmic}
\begin{algorithmic}
\Procedure{DecodeAndEst.}{$\vy,\vzhat,i,j$}
\State $\vyhat\gets\vy-\rho'_{j,i}\vzhat$
\State Decode orthogonal channel $j$ using $\vyhat$
\State $\vchat\gets\; \text{decoded codeword}$
\State $\vxhat\gets \;\text{modulation of }\vchat$
\State $\vzhat\gets \vy-\vxhat$
\State \Return $\vchat,\vzhat$
\EndProcedure
\end{algorithmic}
\end{algorithm}

As an example, suppose there are 3 orthogonal channels. At the
first step, all decode in parallel. If decoder 2 provides the most
accurate decoding, it is selected as the lead, providing an estimate
$\vzhat_2$ to the decoders 1 and 3, which repeat the process. Note
that mixing-and-matching of decoders, even at different stages of
the dynamic Noise Recycling, is possible.

We first we consider decoders in which the speed of decoding provides
a measure of confidence in their decoding accuracy. Soft GRAND with
ABandonment (SGRANDAB)~\cite{solomon2020} has that feature. SGRANDAB
aims to identify the noise that corrupted a transmission from which
the codeword can be inferred, rather than identifying the codeword
directly and can decode any block code. It does this by removing
possible noise effects, from most likely to least likely as determined
by soft information, from a received signal and querying whether
what remains is in the codebook. The first instance that results
in success is a maximum likelihood decoding. If no codeword is found
before a given number of codebook queries, SGRANDAB abandons decoding
and reports an error. SGRANDAB reports the number of code-book
queries made until a code-book element was identified, and fewer
queries can being used as a proxy for a more confident decoding.

We simulated Dynamic Noise Recycling in GM channels using a
$\sbrace{64,46}$ CA-Polar code. We first consider the method on
$m=2$ orthogonal channels with $\rho=0.6$, where the initial decoding
is performed using SGRANDAB on both channels, and the noise-recycled
decoding is performed using either SGRANDAB or CA-SCL.
Fig.~\ref{fig:bler_same_rate_m_mg} (a) reports BLER performance.
As without Noise Recycling SGRANDAB results in lower BLER than
CA-SCL, it provides a correct noise sequence more often. The second
decoder can use SGRANDAB, as during the initial phase, or instead
use CA-SCL. Either way, the remaining channel benefits significantly
from Noise Recycling, with a gain of more than 1 dB, even for codes
of the same rate, by Dynamic Noise Recycling.
Fig.~\ref{fig:bler_same_rate_m_mg} (b) reports the BLER of an
SGRANDAB decoded channel without Noise Recycling.  For $\rho=0.8$,
and $m=3$, 5 or 8, Dynamic Noise Recycling is employed for one
round, and then Noise Recycling, where all decoders use SGRANDAB.
Again, this shows a significant improvement in BLER for all values
of $m$. For example there is a gain of about 1.7 dB for $m=8$ at a
target BLER of $10^{-4}$.

\begin{figure}
    \centering
    \ifdouble
    \includegraphics[width=1 \columnwidth]{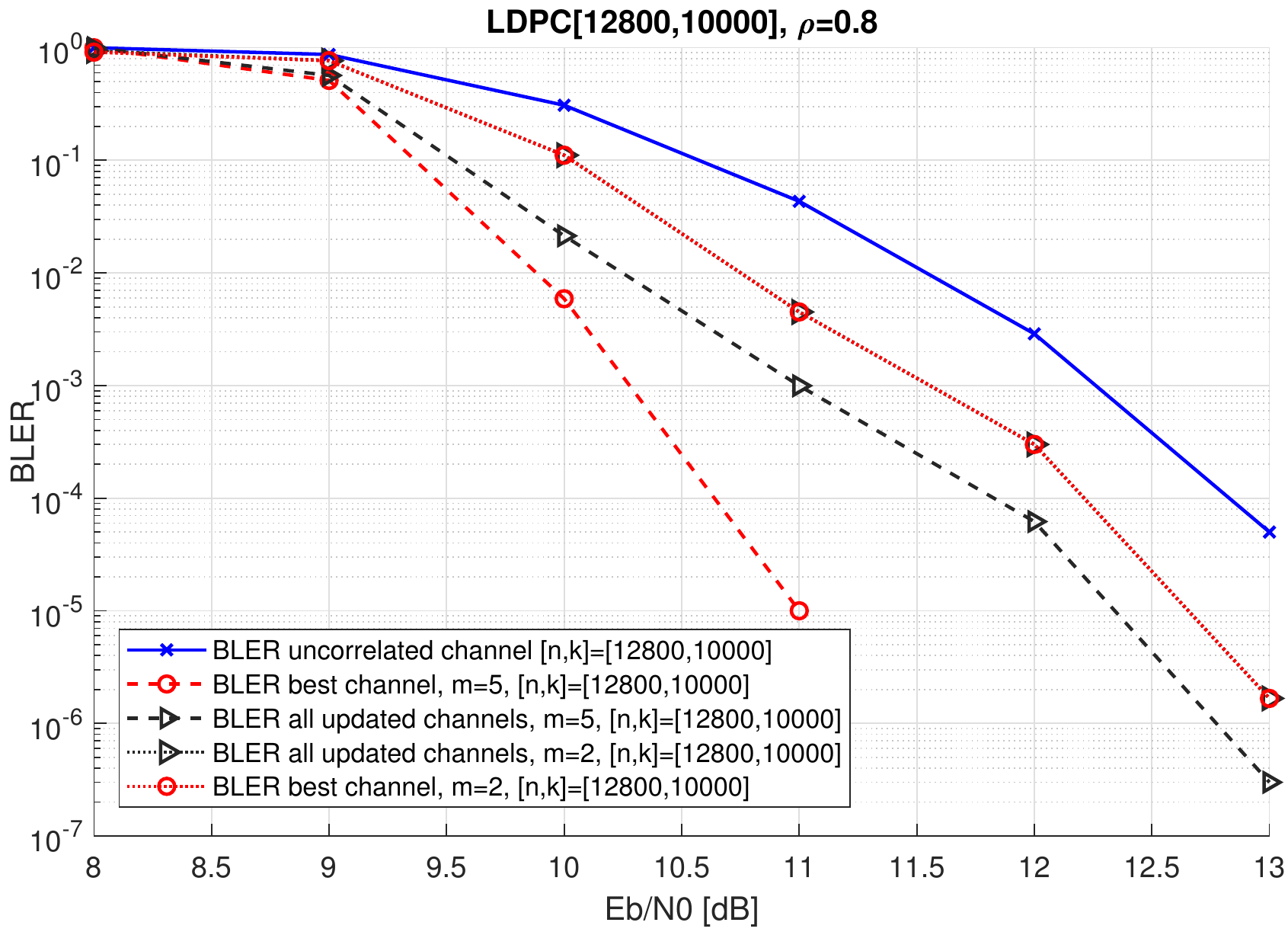}
    \else
    \includegraphics[width=0.6 \columnwidth]{ldpc_m5_2_rate07813_rho08_3}
    \fi
    \caption{BLER vs Eb/N0 for $\sbrace{12800,10000}$ LDPC codes decoded with Belief Propagation and dynamic Noise Recycling in $m=5$ and $m=2$ channels.
	}
    \label{fig:bler_same_rate_m_mg3}
\end{figure}

We next considered decoders in which the proxy for the most confident
decoding could be the channel whose noise estimation is the most
likely. Initially the $i$-th decoder decodes its orthogonal channel
output $\vy_i$ without Noise Recycling, and proceeds to compute the
log-likelihood of the estimated noise $\hat{\vz}_i$,
\begin{equation*}
    l_i(\hat{\vz}_i;0,\sigma) = -\sum_{j=1}^{n} \log\left( \frac{1}{\sigma \sqrt{2\pi}} e^{-\hat{z}_{i,j}^{2}/(2\sigma^2)} \right),
\end{equation*}
for each $i\in \{1, \ldots, m\}$. Using the log-likelihood of the noise estimated at each decoder,
the leading decoder is chosen as the decoder with the most likely
noise channel estimated, namely,
\begin{equation*}
    \underset{(i)}{\arg\max} \quad l_i(\hat{\vz}_i;0,\sigma).
\end{equation*}
The remaining decoders subtract the correlated portion of the
estimated noise from their received signals, starting from orthogonal
channels $i\pm 1$, that leads to higher SNRs, as depicted in
Fig.~\ref{fig:gm_chain}.

Fig.~\ref{fig:bler_same_rate_m_mg3} provides performance results
for $\sbrace{12800,10000}$ LDPCs decoded with Belief Propagation.
Even at these long block-lengths, where one might expect noise
variability between different channels to average out, a gain of
approximately 1.25 dB is observed between a decoder that does not
use Noise Recycling, and a decoder that does so with $m=5$ at a
target BLER of $10^{-3}$.

\subsection{Noise Recycling With Re-Recycling}\label{sub:rerecycling}
In all of the results shown so far, the lead channel pays a price
by not benefitting from Noise Recycling. Here we demonstrate the
potentially counter-intuitive point that this need not be the case.
Instead, one can re-recycle, feeding back a recycled noise to the
lead channel, redecode it, and get improved BLER. Reconsidering the
heuristic argument presented in equation \eqref{eq:bler}, on
re-recycling, this block error rate on the left hand side plays the
role that $\BLER(\sigma^2)$ did in the discussion of recycling. As
a result, this gives additional probability to the
$\BLER(\sigma^2(1-\rho^2))$ term and so a further contraction to a
less noisy channel, improving the lead channel's decoding.

\begin{figure}
    \centering
    \ifdouble
    \includegraphics[trim= 0cm 0cm 0cm 0.4cm,clip,width=1 \columnwidth]{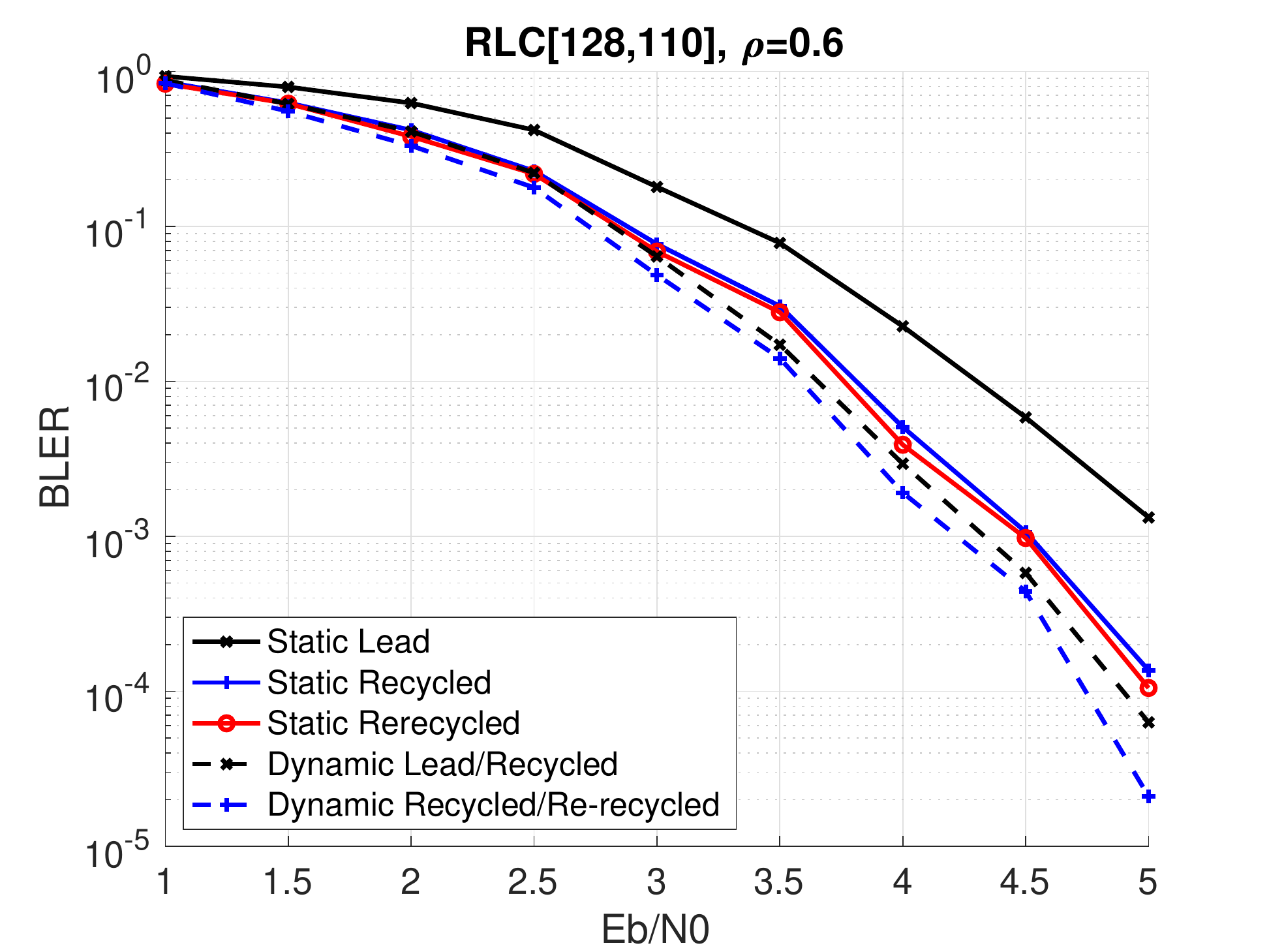}
    \else
    \includegraphics[trim= 0cm 0cm 0cm 0.4cm,clip,width=0.6 \columnwidth]{rerecycling_128_110}
    \fi
    \caption{BLER vs Eb/N0 for two $\sbrace{128,110}$ RLCs decoded with
    ORBGRAND and using Static or Dynamic Noise Recycling with and without
    Re-Recycling for a symmetric Gaussian channel with $\rho=0.6$.}
    \label{fig:rerecycling_128_110}
\end{figure}

For a symmetric Gaussian model with two channels employing RLCs
decoded with ORBGRAND, Fig. \ref{fig:rerecycling_128_110} reports
BLER performance.  For Static Noise Recycling, where a pre-determined
channel is chosen to always lead, the black line reports its BLER
curve, while the blue line reports the BLER curve of the second
channel after Noise Recycling. The noise recycled channel experiences
a $\sim$0.5dB gain at a target BLER of $10^{-3}$. The red curve provides
the BLER performance for the lead channel after Noise Recycling is
used from the second channel and it is re-decoded, and its performance
slightly outstrips the second channel, also giving a $\sim$0.5dB gain.
That is, through Noise Recycling with Re-Recycling, both channels
have gained approximately 0.5dB, and the lead channel is at no
disadvantage. In dashed lines, also shown are the results for the
Dynamic Noise Recycling results with ORBGRAND's query count used
as the proxy for decoding confidence. With Dynamic Noise Recycling,
each channel is the lead $\sim$50\% of the time and the recycled channel
$\sim$50\% of the time, giving the dashed black line.  With Re-Recycling,
the lead channel is replaced with its re-recycled decoding and a
further gain is obtained.

\begin{figure}
    \centering
    \ifdouble
    \includegraphics[width=0.95 \columnwidth]{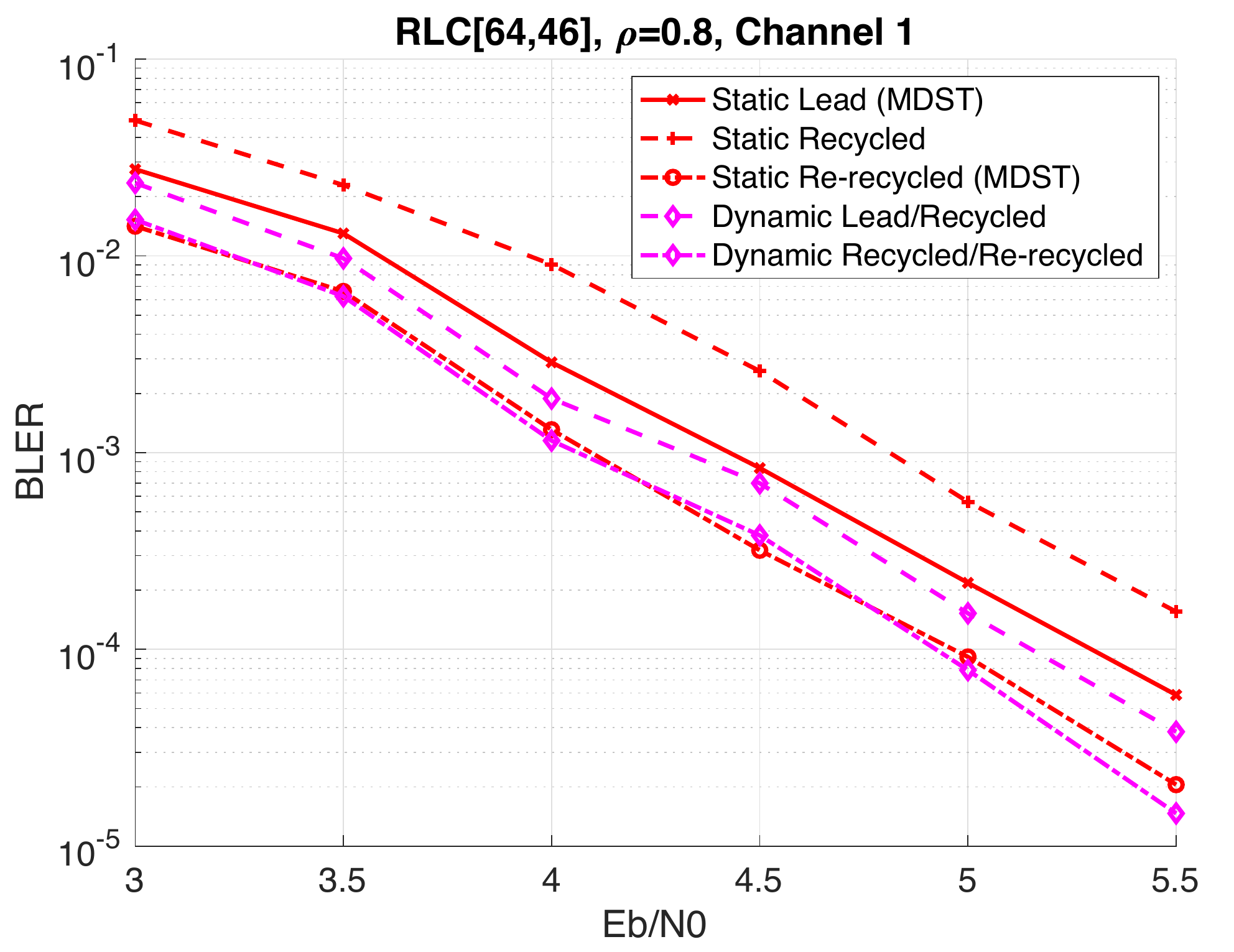}
    \includegraphics[width=0.95 \columnwidth]{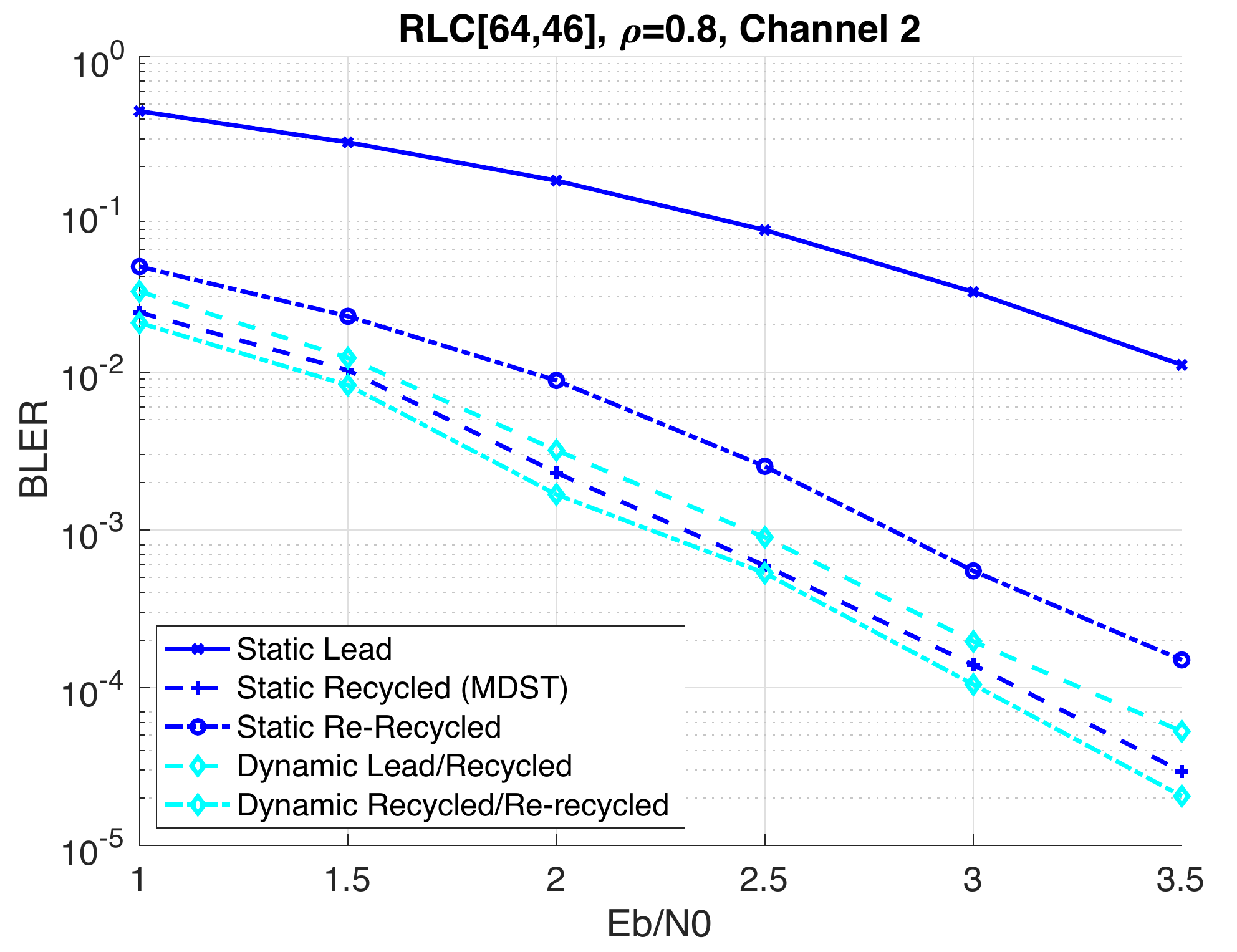}
    \else
    \subfigure{\includegraphics[trim=0cm 0.0cm 0cm 0cm,width=0.49\columnwidth]{rerecycling_64_46_asym_a}}
    \subfigure{\includegraphics[trim=0cm 0.0cm 0cm 0cm,width=0.49\columnwidth]{rerecycling_64_46_asym_b}}
    \fi
    \caption{BLER vs Eb/N0 for two $\sbrace{64,46}$ RLCs decoded with
    ORBGRAND and using Static or Dynamic Noise Recycling with Re-Recycling, where
    channel one is operating at 2dB higher than channel two, and $\rho=0.8$.}
    \label{fig:rerecycling_64_46_asym}
\end{figure}

Fig. \ref{fig:rerecycling_64_46_asym} provides a second example of
re-recycling, but with asymmetric channels. Here two channels use
the same class of $\sbrace{64,46}$ RLCs, but with channel two having
a SNR that is 2dB lower than that experienced on channel one. The
solid red and blue lines show the independent BLER curves of each
channel when decoded independently. If, as advocated in this paper,
MDST was applied to determine a static recycling order Channel 1
would be selected as the lead.  In that case, Channel 1 would have
BLER performance of the solid red line and the performance of Channel
2, which benefits from Noise Recycling, would improve to the dashed
blue line, a gain of over 2dB. If Re-Recycling was used, Channel
1's performance would see the improvement shown in the dash-dotted
red line, giving it a gain of $\sim$0.5dB. That is, by first decoding
the more reliable channel, the less reliable channel's performance
is improved. However, by Re-Recycling noise, the first channel's
performance is then also improved. If MDST was not applied and
Channel 2, the less reliable one, was chosen as the lead, due to
the asymmetry, the performance of Channel 1 degrades by $\sim$0.3dB after
Noise Recycling. This emphasises that in the presence of asymmetries,
static order choice has significant consequences. However, even in
this case, having chosen Channel 2 as the lead in contradiction to
the order determined by MDST, note that its performance is significantly
enhanced by Re-Recycling, swapping the solid blue line for the
dash-dotted blue one.

When Dynamic Noise Recycling is employed with ORBGRAND's query count
used as the soft information that determines decoding confidence,
each channel is a dynamic mixture of being the lead channel and the
recycled one (the dashed magenta and cyan lines) or, if Re-Recycling
is used, the recycled channel and the re-recycled channel (the
dash-dotted magenta and cyan lines). The performance of these is
also shown, where it can be seen that the latter, a dynamically
chosen average of being recycled or re-recycled, gives the best
performance with an approximate 2dB gain for Channel 2 and a $\sim$0.5dB
gain for Channel 1.

\section{Conclusion and Discussion}\label{sec:conclusion}
We introduced Noise Recycling for orthogonal channels experiencing
correlated noise, as a means to improve communication performance
for any combination of codes and decoders. The performance improvement
is twofold, we proved it enables rate gains and provided evidence
of its reliability improvement aspect. We analyzed orthogonal
correlated channels, i.e. channels in which data that is sent on
different channels is independent. A natural extension is considering
the use of Noise Recycling in wireless communications, and the
consequences of uncertainty in it. Noise Recycling points to the
benefit of correlation among orthogonal channels, opening an
interesting vein of investigation where orthogonal channels, say
in OFDM or TDMA, are chosen with a preference for noise correlation
among them, with attendant effects in terms of rate and power
allocation among orthogonal channels. In particular, noise correlation may be seen as an added advantage to dense OFDM channel placement, beyond the inherently desirable efficiency in bandwidth use that density entails.

\appendices

\section{Achievability Proof}\label{sec:code}
\begin{proof}
To establish Theorem \ref{LB}, our coding scheme carries
the flavor of~\cite[Chapter 9.1]{cover2012elements}. We create
$m$ independent random codebooks such that the $j$-th codebook
consists of $2^{nR_j}$ codewords independently drawn from
$\vX^j\paren{1},\ldots,\vX^j\paren{2^{nR_j}}\sim\cN\paren{0,P_j-\epsilon}$, where
$R_j<C\paren{P_j/\paren{1-\rho_j^2}\sigma_j^2}$, and a superscript $j$ indicates that a codeword was
chosen from the $j$-th codebook. For each orthogonal channel, the
codebook need only be known at its encoder and decoder. The transmitters of the
orthogonal channels send $\vX^1\paren{i_1},\ldots,\vX^m\paren{i_m}$.
The decoders operate according to the decoding order dictated by
Lemma~\ref{lem:mdst}. With $\vZhat_{\pj}=0$ if $\pj=r$, the $j$-th
decoder subtracts $\rho_j'\vZhat_{\pj}$ from its channel output
$\vY_j$, resulting in
$\vY_j'=\vY_j-\rho_j'\vZhat_{j}$, where
\begin{equation*}
\rho_j'=\begin{cases}
\rho'_{\pj,j} & \pj\neq r \\
0 & \pj=r
\end{cases}
\end{equation*}
and $\vZhat_{\pj}$ is the estimated noise of orthogonal channel
$\pj$. It then identifies as the decoding the unique codeword that
is jointly-typical with $\vY_j'$ and satisfies the power constraints.
If a codeword does not exist or is not unique, the $j$-th decoder
decodes in error.

We prove the result using techniques redolent of those in~\cite[Chapter
9.1]{cover2012elements}. We bound from below the probability that
the jointly decoding $\vX_1,\ldots,\vX_m$ is successful. The event
of successfully decoding the $j$-th channel and all of its predecessors
in the MDST is denoted by $\cC_j$. The event of successfully decoding
all orthogonal channels is denoted by $\cC$. The event of a decoding
failure in the $j$-th decoder is denoted by $\cE_j$. Define the
following events:
\begin{eqnarray*}
   &&E_{0,j}=\cbrace{n^{-1}\sum_{i}{\paren{X^j\paren{i}}^2}>P_j},\\
   &&E_{i,j}=\cbrace{X_j\paren{i},Y_j'\text{ are jointly
}\epsilon\text{-typical}}.
\end{eqnarray*}
Then $P\paren{\cC}=\prod_{j=1}^m P\paren{\cC_j\mid \cC_{\pj}}$. From results for a single channel, we know that $P\paren{\cC_j\mid\cC_{\pj}}\geq 1-3\epsilon$ when $\pj=r$ for $n$ sufficiently large. Without loss of generality, assume that the $j$-th transmitter sends the first codeword of the $j$-th codebook. We bound $P\paren{\cE_j\mid \cC_{\pj}},\pj\neq r$ in a similar fashion to the single case:
\begin{equation*}
\begin{split}
    &P\paren{\cE_j\mid \cC_{\pj}}\leq \\
    &P\paren{E_{0,j}\mid \cC_{\pj}}+P\paren{E_{1,j}^c\mid \cC_{\pj}}+\sum_{i=2}^{2^{nR_j}}P\paren{E_{i,j}\mid \cC_{\pj}},
\end{split}
\end{equation*}
where the inequality follows from the union bound. For sufficiently large $n$, $P\paren{E_{0,j}\mid \cC_{\pj}}\leq \epsilon$ by the law of large numbers and $P\paren{E_{1,j}^c\mid \cC_{\pj}}\leq \epsilon$ by joint typicality. We bound $P\paren{E_{i,j}\mid \cC_{\pj}},i>1$:
\begin{equation*}
\begin{split}
&P\paren{E_{i,j}\mid \cC_{\pj}}= \\
&P\paren{E_{i,j}\mid X_{\pj},Y_{\pj},X_{\parent{\pj}},Y_{\parent{\pj}},\ldots}\leq\\
&P\paren{E_{i,j}\mid X_{\pj},Y_{\pj}}\leq 2^{-n\paren{\mutualinf{X_j}{Y_j\mid X_{\pj}, Y_{\pj}}-R_j-3\epsilon}}
\end{split}
\end{equation*}and
\begin{equation}\label{rate_j}
\begin{aligned}
&\mutualinf{X_j}{Y_j\mid X_{\pj}, Y_{\pj}}\\
&=\mutualinf{X_j}{Y_j\mid X_{\pj}, X_{\pj}+Z_{\pj}}=\mutualinf{X_j}{Y_j\mid Z_{\pj}}\\
&=h\paren{X_j\mid Z_{\pj}}+h\paren{Y_j\mid Z_{\pj}}-h\paren{X_j,Y_j\mid Z_{\pj}}\\
&=h\paren{X_j}+h\paren{X_j+\rho'_j Z_{\pj}+\Xi_{\pj,j}\mid Z_{\pj}}-\\
&\hspace{1cm}h\paren{X_j,X_j+\rho'_j Z_{\pj}+\Xi_{\pj,j}\mid Z_{\pj}}\\
&=h\paren{X_j}+h\paren{X_j+\Xi_{\pj,j}}-h\paren{X_j,X_j+\Xi_{\pj,j}}\\
&=\mutualinf{X_j}{X_j+\Xi_{\pj,j}}=\mutualinf{X_j}{Y_j'}
\end{aligned}
\end{equation}
using the fact that $\paren{X_j,\Xi_{\pj,j}}\perp Z_{\pj}$. Therefore,
\begin{equation*}
   P\paren{E_{i,j}\mid X_{\pj}, Y_{\pj}}\leq 2^{-n\paren{\mutualinf{X_j}{Y_j'}-R_j-3\epsilon}}.
\end{equation*}
Picking $R_j<\mutualinf{X_j}{Y_j'}-3\epsilon$ yields $P\paren{\cE_{i,j}\mid \cC_{\pj}}\leq 3\epsilon$. Ultimately, we get $P\paren{\cC}\geq \paren{1-3\epsilon}^m$ which concludes the proof as $\epsilon$ can be made arbitrarily small.
\end{proof}

\section{Upper Bound Proof}\label{sec:UpperBound_proof}
In this section, we derive the proof of Theorem~\ref{eq:upper_bound}:
\ifdouble
\begin{eqnarray}
    &&\hspace{-0.5cm} I(X_{j},X_{i};Y_{j},Y_{i}) \nonumber\\
    &&\hspace{-0.5cm} = I(X_{j},X_{i};Y_{i}) + I(X_{j},X_{i};Y_{j}|Y_{i}) \nonumber\\
    &&\hspace{-0.5cm} = I(X_{i};Y_{i}) + I(X_{j};Y_{i}|X_{i}) + I(X_{j},X_{i};Y_{j}|Y_{i}) \nonumber\\
    &&\hspace{-0.5cm} \stackrel{(a)}{=} I(X_{i};Y_{i}) + I(X_{j},X_{i};Y_{j}|Y_{i}) \nonumber\\
    &&\hspace{-0.5cm} = I(X_{i};Y_{i}) + I(X_{i};Y_{j}|Y_{i}) + I(X_{j};Y_{j}|X_{i},Y_{i}) \nonumber\\
    &&\hspace{-0.5cm} \stackrel{(b)}{=} I(X_{i};Y_{i}) + I(X_{i};Y_{j}|Y_{i}) + I(X_{j};Y_{j}') \nonumber\\
    &&\hspace{-0.5cm} = I(X_{i};Y_{i}) + h(Y_{j}|Y_{i}) - h(Y_{j}|X_{i},Y_{i}) + I(X_{j};Y_{j}') \nonumber\\
    &&\hspace{-0.5cm} = I(X_{i};Y_{i}) + h(Y_{j}|Y_{i})  - h(X_{j}+\rho'_{i,j} Z_{i}+\Xi_j|Z_{i}) \nonumber\\
    &&\hspace{6.0cm} + I(X_{j};Y_{j}') \nonumber\\
    &&\hspace{-0.5cm} = I(X_{i};Y_{i}) + h(Y_{j}|Y_{i}) - h(Y_{j}') + I(X_{j};Y_{j}') \nonumber\\
    &&\hspace{-0.5cm} = I(X_{i};Y_{i}) + h(Y_{j}) - I(Y_{j};Y_{i}) - h(Y_{j}') + I(X_{j};Y_{j}') \nonumber\\
    && \hspace{-0.5cm} \stackrel{(c)}{=} I(X_{i};Y_{i}) + h(Y_{j}) \nonumber\\
    && \hspace{2.1cm} + \frac{1}{2}\log\left(1-\tilde{\rho}_{i,j}^2\right) - h(Y_{j}') + I(X_{j};Y_{j}') \nonumber\\
    && \hspace{-0.5cm} \stackrel{(d)}{=} \frac{1}{2}\log\left(1+\frac{P_i}{\sigma_{i}^{2}}\right) + \frac{1}{2}\log\left(2\pi e\left(P_j+\sigma_{j}^{2}\right)\right) \nonumber\\
    && \hspace{-0.1cm} + \frac{1}{2}\log\left(1-\tilde{\rho}_{i,j}^2\right) - \frac{1}{2}\log\left(2\pi e\left(P_j+\paren{1-\rho_{i,j}^2}\sigma_{j}^{2}\right)\right)\nonumber\\
    && \hspace{3.7cm} + \frac{1}{2}\log \left(1+\frac{P_j}{\paren{1-\rho_{i,j}^{2}}\sigma_{j}^2}\right) \nonumber\\
    && \hspace{-0.5cm} = \frac{1}{2}\log\left(1+\frac{P_i}{\sigma_{i}^{2}}\right) + \frac{1}{2}\log\left(\frac{P_j+\sigma_{j}^{2}}{P_j+\paren{1-\rho_{i,j}^2}\sigma_{j}^2}\right)\nonumber\\
    && \hspace{1.1cm} + \frac{1}{2}\log\left(1-\tilde{\rho}_{i,j}^2\right) + \frac{1}{2}\log \left(1+\frac{P_j}{\paren{1-\rho_{i,j}^{2}}\sigma_{j}^2}\right) \nonumber
\end{eqnarray}
\else
\begin{eqnarray}
    &&\hspace{-0.5cm} I(X_{j},X_{i};Y_{j},Y_{i}) \nonumber\\
    &&\hspace{-0.5cm} = I(X_{j},X_{i};Y_{i}) + I(X_{j},X_{i};Y_{j}|Y_{i}) \nonumber\\
    &&\hspace{-0.5cm} = I(X_{i};Y_{i}) + I(X_{j};Y_{i}|X_{i}) + I(X_{j},X_{i};Y_{j}|Y_{i}) \nonumber\\
    &&\hspace{-0.5cm} \stackrel{(a)}{=} I(X_{i};Y_{i}) + I(X_{j},X_{i};Y_{j}|Y_{i}) \nonumber\\
    &&\hspace{-0.5cm} = I(X_{i};Y_{i}) + I(X_{i};Y_{j}|Y_{i}) + I(X_{j};Y_{j}|X_{i},Y_{i}) \nonumber\\
    &&\hspace{-0.5cm} \stackrel{(b)}{=} I(X_{i};Y_{i}) + I(X_{i};Y_{j}|Y_{i}) + I(X_{j};Y_{j}') \nonumber\\
    &&\hspace{-0.5cm} = I(X_{i};Y_{i}) + h(Y_{j}|Y_{i}) - h(Y_{j}|X_{i},Y_{i}) + I(X_{j};Y_{j}') \nonumber\\
    &&\hspace{-0.5cm} = I(X_{i};Y_{i}) + h(Y_{j}|Y_{i})  - h(X_{j}+\rho'_{i,j} Z_{i}+\Xi_j|Z_{i}) + I(X_{j};Y_{j}') \nonumber\\
    &&\hspace{-0.5cm} = I(X_{i};Y_{i}) + h(Y_{j}|Y_{i}) - h(Y_{j}') + I(X_{j};Y_{j}') \nonumber\\
    &&\hspace{-0.5cm} = I(X_{i};Y_{i}) + h(Y_{j}) - I(Y_{j};Y_{i}) - h(Y_{j}') + I(X_{j};Y_{j}') \nonumber\\
    && \hspace{-0.5cm} \stackrel{(c)}{=} I(X_{i};Y_{i}) + h(Y_{j}) + \frac{1}{2}\log\left(1-\tilde{\rho}_{i,j}^2\right) - h(Y_{j}') + I(X_{j};Y_{j}') \nonumber\\
    && \hspace{-0.5cm} \stackrel{(d)}{=} \frac{1}{2}\log\left(1+\frac{P_i}{\sigma_{i}^{2}}\right) + \frac{1}{2}\log\left(2\pi e\left(P_j+\sigma_{j}^{2}\right)\right) \nonumber\\
    && \hspace{2.1cm} + \frac{1}{2}\log\left(1-\tilde{\rho}_{i,j}^2\right) - \frac{1}{2}\log\left(2\pi e\left(P_j+\paren{1-\rho_{i,j}^2}\sigma_{j}^{2}\right)\right) + \frac{1}{2}\log \left(1+\frac{P_j}{\paren{1-\rho_{i,j}^{2}}\sigma_{j}^2}\right) \nonumber\\
    && \hspace{-0.5cm} = \frac{1}{2}\log\left(1+\frac{P_i}{\sigma_{i}^{2}}\right) + \frac{1}{2}\log\left(\frac{P_j+\sigma_{j}^{2}}{P_j+\paren{1-\rho_{i,j}^2}\sigma_{j}^2}\right) + \frac{1}{2}\log\left(1-\tilde{\rho}_{i,j}^2\right) + \frac{1}{2}\log \left(1+\frac{P_j}{\paren{1-\rho_{i,j}^{2}}\sigma_{j}^2}\right) \nonumber
\end{eqnarray}
\fi
where (a) follows since $I(X_{j};Y_{i}|X_{i})=0$ using the fact that $\vX_j\perp\vZ_{i}$, and the fact that  $X_{j}$ and $X_{i}$ are independent, (b) follows from \eqref{rate_j}, (c) follows from \cite[Example 8.5.1]{cover2012elements} where $X \sim\cN\paren{0,P}$ and $Y\sim\cN\paren{0,P+\sigma^2}$, such that
\begin{equation*}
    \text{cov}\paren{Y_{i},Y_{j}}=E\paren{Y_{i}Y_j}=E\paren{Z_{i}Z_j}=\rho_{i,j}\sigma_i\sigma_j,
\end{equation*}
and the mutual information between correlated Gaussian random variables with correlation
\begin{align*}
\tilde{\rho}_{i,j}=\rho_{i,j}\frac{\sigma_i\sigma_j}{\sqrt{(P_i+\sigma_{i}^2)(P_j+\sigma_{j}^2)}}
\end{align*}
is
\begin{equation*}
 I(Y_{j};Y_{i}) = h(Y_{j})+h(Y_{i}) - h(Y_{j},Y_{i}) = -\frac{1}{2}\log \left(1-\tilde{\rho}_{i,j}^2\right),
\end{equation*}
(d) follows from \cite[Theorem 8.6.5]{cover2012elements}, where using the fact that $X_j$ and $Z_j$ are independent,
\begin{equation*}
    h(Y_j) \leq \frac{1}{2} \log\left(2\pi e\left(P_j+\sigma_{j}^{2}\right)\right),
\end{equation*}
with equality if and only if $X_j \sim\cN\paren{0,P_j}$. This completes the upper bound proof.
\qed

\bibliographystyle{IEEEtran}
\bibliography{bib}
\end{document}

\ifCLASSINFOpdf
\else
\fi

\krd{where?}
Finally,
for arbitrarily correlated orthogonal channels, each with variance
$\sigma_j$, we show the performance obtained using Noise Recycling
with the ordering scheme proposed in Algorithm~\ref{alg:general_noise}.